\providecommand{\keywords}[1]
{
  \small	
  \textbf{\textit{Keywords}} #1
}
\title{Qudit surface codes and hypermap codes}
\author{Zihan Lei}
\date{
School of Mathematical Sciences USTC,Hefei, China\\
lzihan9175@gmail.com\\
ORCID:0000-0002-7245-5958\\
\today}
\newtheorem{theorem}{Theorem}
\newtheorem{lemma}[theorem]{Lemma}
\DeclareMathOperator{\im}{im}
\DeclareMathOperator{\stAr}{star}
\DeclareMathOperator{\weight}{weight}
\begin{document}
%\linenumbers
\maketitle
\begin{abstract}
    In this article, we define homological quantum codes in arbitrary qudit dimensions $D\geq 2$ by directly defining CSS operators on a 2-Complex $\Sigma$. If the 2-Complex is constructed from a surface, we obtain a qudit surface code. We then prove that the dimension of the code we define always equals the size of the first homology group of $\Sigma$. We also define the distance of the codes in this setting, finding that they share similar properties with their qubit counterpart. Additionally, we generalize the hypermap-homology quantum code proposed by Martin Leslie to the qudit case. For every such hypermap code, we construct an abstract 2-Complex whose homological quantum code is equivalent to the hypermap code.
\end{abstract}

\keywords{qudit, 2-Complex, CSS code,  surface code, hypermap}

\section{Introduction}
Surface codes are an important class of error-correcting codes in fault-tolerant quantum computation. In the literature, rigorous constructions of surface codes are usually done for the case of $\mathbb{Z}_2$-vector spaces, which is reasonable because qubit quantum computation theories have been highly successful, and qubit quantum codes are still dominant in today's research. However, unlike qubits, which are two-level quantum systems, qudits possess a higher-dimensional state space. This inherent richness opens up new possibilities for information encoding and error correction \cite{10.1007/3-540-49208-9_27,PhysRevX.2.041021,PhysRevResearch.3.L042007,qudition}. By harnessing the power of qudits, we have the potential to design error correction codes that are more robust, efficient, and capable of correcting a greater number of errors than their binary counterparts. In the past decade,  some numerical studies have been conducted using  qudit  surface codes \cite{Sabo2021TrellisDF,Hussain,PhysRevA.92.032309} which is a special case of qudit topological quantum error correction codes. We believe that more study of similar kind are expected to be
conducted in these codes, such as the investigation of properties under coherent noise that has garnered attention in recent years \cite{coherent}. Therefore, understanding the general
structure of qudit topological quantum error correcting codes is helpful for research in this area. However, there seems to be limited literature specifically addressing such
a general structure. For this reason, this paper attempts to explore the construction
and basic properties of \(D \geq 2\) dimensional qudit topological quantum codes. Our results do not impose any specific requirements on D other than restricting it to integers lager than $1$, making the scope of applicability of our results sufficiently large. These results are very similar to the case of qubits,
although the proof methods may differ. However, things are not always the same.
For example, besides the necessary Theorem \ref{L trans}, we have not further investigated the
structure of logical operators as certain desirable properties of logical operators have more
requirements on \(D\), which we will explain further in the last part of section 3.
\par The general theory of qudit stabilizer and surface codes is introduced in Bombin and Martin-Delgado \cite{Bombin}, where the authors define them using symplectic codes. In this article, we provide a more direct construction of surface codes with arbitrary qudit dimension $D\geq 2$ in a way similar to the prevailing literature on qubit surface codes, such as Freedman and Meyer \cite{Projective}. Following Gheorghiu \cite{standard}, we define stabilizer codes simply as the subspace stabilized by a subgroup $\mathcal{S}$ of the qudit Pauli group. Then, we use the usual CSS construction to obtain $\mathcal{S}$ from an arbitrary 2-Complex defined in Bombin and Martin-Delgado \cite{Bombin}. When the 2-Complex is derived from a surface, we obtain the qudit surface code. Recall that in the case of a qubit stabilizer code, the number $k$ of logical qubits can be calculated by $k=n-r$, where $n$ is the number of physical qubits and $r$ is number of independent generators of $S$. This property can be generalized, which shows that  even in arbitrary qudit dimensions,   the “size” of a stabilizer code can be calculated by the size of its stabilizer group \cite{standard}. In this article, we use this generalization  (Theorem \ref{size of sta}) to relate the size of the homology group of a 2-Complex to that of its homological quantum code (Theorem \ref{size of homo}). This is more general than Theorem \uppercase\expandafter{\romannumeral3}.2 in Bombin and Martin-Delgado \cite{Bombin}, whose proof relies on the dimension theory of $\mathbb{Z}_2$-vector spaces. However,  we encounter $\mathbb{Z}_D$-modules in this article, which in general are not  vector spaces unless $\mathbb{Z}_D$ is a field, or equivalently, $D$ is a prime number. Next, we discuss the simplest kind of errors, i.e. Pauli errors, which permits us to study the distance of qudit homological quantum codes. In this part, we find that the main results are almost identical to the case of qubits \cite{bombin2013introduction, breuckmann2018phd}.
 \par As an example, we extend the hypermap-homology quantum code defined by Leslie \cite{Martin} to the case of qudits. Unlike Leslie's approach, which builds hypermap quantum codes from topological hypermaps, we define them directly from combinatorial hypermaps, making the statements more precise and rigorous at the cost of losing some geometric intuition. Then, for a given hypermap quantum code, we construct an abstract 2-Complex whose homological quantum code defined in this article is exactly equivalent to it. This approach was motivated by Sarvepalli's work, which shows that any (canonical) hypermap quantum code is equivalent to a surface code that can be directly constructed on its underlying surface.\cite{Pradeep}
 
\section{Preliminaries}
In this section, we review the necessary background material in a self-contained way. We will talk about qudit systems and their stabilizer codes in 2.1, then we will give the definition of 2-Complex in 2.2, which is helpful in clarifying the structure of qudit surface code and is also necessary in section 4.
\subsection{ Qudit Systems Of Dimension \(D^n\)}
A \emph{qudit} is a  finite dimensional quantum system  with  dimensioin $D\geq2$. As with the qubits' case, two operators
 $X$ and $Z$ act on a single qudit, and is defined as:
\begin{align}
  X & =\sum_{j\in \mathbb{Z}_D}{|j+1\rangle\langle{j}|} \\
    Z & =\sum_{j\in \mathbb{Z}_D}{\omega^j |j\rangle\langle{j}|}    
 \end{align}
 where $\omega=e^{2\pi{i}/D}$ and $\{|j\rangle\mid j\in {\mathbb{Z}_D}\}$ is an orthonormal basis for the qudit Hilbert space $\mathcal{H}$, also, the addition of integers in equation (1) should be treated to be done in $\mathbb{Z}_D$ . Notice that the definition of $X$ there is the adjoint $X^\dag$ of that in Gheorghiu.\cite{standard} From the above equations, we have \(ZX=\omega{XZ}\), and $X^D=Z^D=1$. As with the qubits' Hadamard gate, there are so called \emph{(discrete) Fourier gate} which maps the $\omega^k$-eigenvector \(|k\rangle\) of $Z$ to an $\omega^k$-eigenvector $|H_k\rangle$ of $X$, with
 \begin{equation}
     |H_k\rangle=\frac{1}{\sqrt{D}}\sum_{j}\omega^{-jk}|j\rangle.
 \end{equation}

For n qudits, the Hilbert space is denoted by $\mathcal{H}_n$ and we have
 \begin{equation}
     \mathcal{H}_n=\bigotimes_{i=1}^{n}\mathcal{H}
      \end{equation}
 with canonical basis the tensor products of \(|j\rangle\). Denote $X_i$ and $Z_i$ the corresponding $X,Z$ operators acting on the i-th qudit,  we call expressions of the form  
 \begin{equation}\label{Pauli operator}
     \omega^\lambda{X^\mathbf{x}Z^\mathbf{z}}=\omega^\lambda{X_1^{x_1}Z_1^{z_1}}\otimes{X_2^{x_2}Z_2^{z_2}}\otimes{\cdots}{\otimes}X_n^{x_n}Z_n^{z_n}
 \end{equation}
 the \emph{Pauli products},\cite{standard} where \(\lambda\) is an integer and the n-tuples $\mathbf{x}=(x_1,x_2,\cdots,x_n)$,  $\mathbf{z}=(z_1,z_2,\cdots,z_n)$ belongs to \(\mathbb{Z}_D^n\). These Pauli products is closed under multiplication and generate the \emph{Pauli group} \(\mathcal{P}_n\). 
 
 Similar to the case with qubits, we can examine a subgroup $\mathcal{S}$ of $\mathcal{P}_n$ and its stabilizer subspace $\mathcal{C}\mathrel{\mathop:}=\{\vert \phi \rangle\in\mathcal{H}_n \big| s\vert \phi \rangle=\vert \phi \rangle, \forall s\in \mathcal{S}\}$. Notice that the latter can also be defined for any subset of \(L(\mathcal{H}_n)\) ---- the vector space of all linear operators on $\mathcal{H}_n$.  Likewise, in order for $(\mathcal{C},\mathcal{S})$ to be called a stabilizer code, we require that $\mathcal{C}\neq{\{0\}}$, and a necessary condition for this is that $\mathcal{S}$ is commutative and does not contain any scalar multiplication $e^{i\theta}I\neq I$. Otherwise, we can always find a unit complex number $e^{i\theta}\neq 1$ such that $\forall |\phi\rangle\in \mathcal{C}, e^{i\theta}|\phi\rangle=|\phi\rangle$, which implies $\mathcal{C}={\{0\}}$. In fact, the condition of not containing scalar multiplications $e^{i\theta}I\neq I$ is both necessary and sufficient for $\mathcal{C}\neq{\{0\}}$, which is indicated by the following theorem \cite{standard}:
 \begin{theorem}\label{size of sta}
 let \(\mathcal{C}\) be the stabilizer subspace of the subgroup \(\mathcal{S}\subset\mathcal{P}_n\) which  does not contain any scalar multiplication other than identity.  Then
 \begin{equation}\label{numberofk}
     K\times|\mathcal{S}|=D^n,
     \end{equation}
     where K is the dimension of $\mathcal{C}$, \(|\mathcal{S}|\) is the size (cardinality)  of the stabilizer group $\mathcal{S}$.
 \end{theorem}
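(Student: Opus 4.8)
The plan is to study the \emph{group-averaging operator}
\[
  P \mathrel{\mathop:}= \frac{1}{|\mathcal{S}|}\sum_{s\in\mathcal{S}} s \in L(\mathcal{H}_n),
\]
to show that it is the orthogonal projector onto $\mathcal{C}$, and then to evaluate $\operatorname{tr}(P)$ in two different ways. The operator is well defined because $\mathcal{P}_n$ is a finite group: the admissible scalars $\omega^{\lambda}$ are exactly the $D$-th roots of unity, so $\mathcal{S}$ is a finite subgroup.

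First I would check the three defining properties of an orthogonal projector. Idempotence $P^2=P$ follows by reindexing the group sum: for each fixed $s$ the map $t\mapsto st$ permutes $\mathcal{S}$, so $\sum_{t\in\mathcal{S}}st=\sum_{u\in\mathcal{S}}u$ and the double sum collapses. Self-adjointness $P^{\dagger}=P$ holds because every $s\in\mathcal{S}$ is unitary, hence $s^{\dagger}=s^{-1}\in\mathcal{S}$, and inversion again permutes $\mathcal{S}$. The same reindexing gives $sP=P$ for every $s\in\mathcal{S}$, so $\im P\subseteq\mathcal{C}$; conversely $P$ fixes every vector of $\mathcal{C}$, hence $\im P=\mathcal{C}$. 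Since the trace of an orthogonal projector equals the dimension of its image, this yields the first evaluation $\operatorname{tr}(P)=\dim\mathcal{C}=K$.

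The second evaluation is the computational core. Writing a general element as $s=\omega^{\lambda}X^{\mathbf{x}}Z^{\mathbf{z}}$, I would first record the single-qudit identity $\operatorname{tr}(X^{a}Z^{b})=D$ when $a\equiv b\equiv 0\pmod{D}$ and $\operatorname{tr}(X^{a}Z^{b})=0$ otherwise: for $a\not\equiv 0$ the matrix $X^{a}Z^{b}$ has zero diagonal because $j\mapsto j+a$ has no fixed point in $\mathbb{Z}_{D}$, while for $a\equiv 0$ one has $\operatorname{tr}(Z^{b})=\sum_{j\in\mathbb{Z}_{D}}\omega^{jb}$, a vanishing geometric sum unless $b\equiv 0$. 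Tensoring over the $n$ qudits, $\operatorname{tr}(X^{\mathbf{x}}Z^{\mathbf{z}})=D^{n}$ if $(\mathbf{x},\mathbf{z})=(\mathbf{0},\mathbf{0})$ and $0$ otherwise. This is where the hypothesis enters: if some $s\in\mathcal{S}$ had $(\mathbf{x},\mathbf{z})=(\mathbf{0},\mathbf{0})$, it would be a scalar $\omega^{\lambda}I\in\mathcal{S}$ and hence equal to $I$. Therefore $\operatorname{tr}(s)=0$ for every $s\neq I$ in $\mathcal{S}$, so $\operatorname{tr}(P)=\tfrac{1}{|\mathcal{S}|}\operatorname{tr}(I)=D^{n}/|\mathcal{S}|$. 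Comparing the two evaluations gives $K\cdot|\mathcal{S}|=D^{n}$.

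The only step that needs genuine care is modest: verifying $s^{\dagger}=s^{-1}\in\mathcal{S}$ so that $P$ is honestly self-adjoint rather than merely idempotent, and the clean vanishing of $\operatorname{tr}(X^{a}Z^{b})$ for $a\not\equiv 0$ — everything else is formal manipulation of the finite group $\mathcal{S}$. I would also remark that commutativity of $\mathcal{S}$ is never used (although the no-scalar hypothesis forces it, since every commutator in $\mathcal{P}_n$ is a scalar and would therefore lie in $\mathcal{S}$), and that the identity just obtained exhibits $K=D^{n}/|\mathcal{S}|$ as a positive integer, which re-derives $\mathcal{C}\neq\{0\}$ under the stated hypothesis.
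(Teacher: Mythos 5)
Your proposal is correct and follows essentially the same route as the paper's own proof in Appendix A: both define the group-averaging operator $P=\frac{1}{|\mathcal{S}|}\sum_{s\in\mathcal{S}}s$, verify it is the orthogonal projector onto $\mathcal{C}$, and compare $\operatorname{tr}(P)=K$ with $\operatorname{tr}(P)=D^{n}/|\mathcal{S}|$ using the vanishing trace of non-scalar Pauli products and the no-nontrivial-scalar hypothesis. Your added details (the explicit single-qudit trace computation and the observation that the hypothesis already forces commutativity) are correct and consistent with the paper's remarks.
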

 
 It should be noted that the first sentence of Theorem 1 is slightly more stringent than the original one presented in Gheorghiu's work [3], as the proof actually relies on it. (See Appendix A.)  Furthermore, while the lacking of nontrivial scalar multiplications results in the commutativity of $\mathcal{S}$, the converse does not hold. 
 
 Additionally, we would like to mention a type of higher dimensional generalization for qubits, distinct from qudits, in which the dimension of the local quantum system is  $p^k$, where $p$ is a prime number \cite{1715533}. In this scenario, the operators $X$ and $Z$ are defined based on operations performed on finite fields $\mathbf{F}_{p^k}$, rather than  on $\mathbb{Z}_{p^k}$. It's important to note that these generalizations are not utilized further in the article.

\subsection{Oriented 2-Complex}
 In contrast to the case of qubits, the orientation of the underlying 2-Complex is relevant when constructing a qudit surface code. Therefore, we follow the definition of 2-Complexes as introduced by Bombin and Martin-Delgado.\cite{Bombin} An oriented  graph is a graph with each edge assigned an orientation. From a combinatorial perspective, an oriented graph consists of a set of vertices $V$, a set of (directed) edges $E$, and two incidence functions $I_s,I_t:E\rightarrow V$, which we refer to as the ``source" and ``target" functions. We say an edge $e$ goes or points from $I_s(e)$ to $I_t(e)$. Based on this graph, we can construct a set of ``inverse edges" $E^{-1}=\{e^{-1}\mid e\in{E}\}$, besides the fact that it originates from an oriented edge in the reverse direction, $e^{-1}$
is just a symbol here. We also define $(e^{-1})^{-1}\mathrel{\mathop:}={e}$ and $I_s(e^{-1})=I_t(e)$, $I_t(e^{-1})=I_s(e)$, which allows the inverse operation and the functions $I_s,I_t$ to be extended to the whole set $\bar{E}=E\cup{E^{-1}}$. 
%Now a closed walk of lengh $n$ in the oriented graph cyclic \(n\)-tuple
%of directed edges \(\omega=[e_0,e_1,\cdots,e_{n-1}]\) such that $I_t(e_i)=I_s(e_{i+1})$ for every \(i\in \mathbb{Z}_n\) . 
Given an $n$-tuple of extended edges $(e_0,e_1,\cdots,e_{n-1})$ where $e_i\in \bar{E}$ with its index $i\in \mathbb{Z}_n$, and satisfies $I_t(e_i)=I_s(e_{i+1})$, then a \emph{closed walk of length $n$} is defined to be the equivalence class of these $n$-tuples under the equivalence relation of cyclic permutations, i.e., $(e_0,e_1,\cdots,e_{n-1})\sim(\tilde{e}_0,\tilde{e}_1,\cdots,\tilde{e}_{n-1})\Leftrightarrow e_{i+k}=\tilde{e}_i$ for some fixed $k\in \mathbb{Z}_n$, and for all $i$. We denote the equivalence class of $(e_0,e_1,\cdots,e_{n-1})$ by
\begin{equation}
w=[e_0,e_1,\cdots,e_{n-1}]
\end{equation}
which has a well-defined inverse
\begin{equation}
w^{-1}\mathrel{\mathop:}=[\epsilon_0,\epsilon_1,\cdots,\epsilon_{n-1}]
\end{equation}
with $\epsilon_i=e_{n-1-i}^{-1}$.
  
The 2-dimensional generalization of graphs is 2-Complexes. Formally speaking, an \emph{oriented 2-Complexes} is an oriented graph \(\Gamma=(V,E,I_s,I_t)\) with an additional set $F$ called ``faces"  plus a function $B:F\rightarrow{W_\Gamma}$ called gluing map, 
%which comes from the gluing map of a 2-cell along its boundary in algebraic topology,
where $W_\Gamma$ is the set of all closed walks. Similarly, we add a set of symbols \(F^{-1} \mathrel{\mathop:}= \{f^{-1}\big|f \in F\}\), obtaining $\bar{F}=F\cup{F^{-1}}$. Now the domains of the inverse operation and gluing map $B$ can be expanded to $\bar{F}$ by defining $(f^{-1})^{-1} = f$ and:
\begin{equation}\label{gluing map}
    B(f^{-1})=B(f)^{-1}, \quad\forall{f\in \bar{F}}.
\end{equation}
Geometrically, the set $F$ and the function $B$ come from gluing faces in a 2 dimensional cell structure ( also called CW-Complex \cite{Groupoid} ), where a face $f$ in $F$ represents a 2-cell whose boundary is glued onto the underlying one-dimensional skeleton (which corresponds to a directed graph in combinatorics), and which is topologically equivalent to a closed disc. 
%\(\{(x,y)\in \mathbf{R}^2\big| x^2 + y^2 \leq 1\}\). 
The 2-cell has an orientation determined by a normal vector field. According to a specific rule, say, the right-hand rule, this orientation induces a boundary orientation of the 2-cell, which naturally gives rise to a closed path in the directed graph. This corresponds to the map $B$. If we take the same 2-cell but choose the opposite normal vectors field, then we obtain $f^{-1}$, whose induced closed path is obviously opposite to the original one. This gives us equation (\ref{gluing map}). In this article, we forget about these geometric origins and will take the previous combinatorial definition, which does not allow singular cases like gluing 2-cells into a single point but is sufficient for our purpose, as Bombin and Martin-Delgado have pointed out in\cite{Bombin}.

Compact surfaces has cell structures and thus can be combinatorially represented by  2-Complexes. However, not every 2-Complex represent a surface, those that do come from surfaces must satisfy the conditions of \emph{Surface 2-Complex }. For the purpose of our discussion, we will not provide a formal definition of a surface 2-Complex. Interested readers can refer to Bombin and Martin-Delgado for more details. \cite{Bombin}

\section{Qudit homological quantum codes }
We begin our construction of homological quantum codes based on 2-Complexes in section 3.1, then in section 3.2, a  discussion is  given on their distance parameters. Logical operators and their current explorations are also briefly mentioned in section 3.3.  

\subsection{Construction of quantum codes}
  Given a 2-Complex $\Sigma = (V, E, I_s, I_t, F, B)$, we can
  define three $\mathbb{Z}_D$ modules $C_0(\Sigma)$, $C_1(\Sigma)$, and $C_2(\Sigma)$ as free modules generated by the sets $V$, $E$, and $F$, respectively.  For example, \(C_0(\Sigma)\) consists of all the formal sums \(r_1v_1+r_2v_3+\cdots+r_{|V|}v_{|V|}\), where $r_i\in \mathbb{Z}_D, v_i\in V$. Then a boundary operator $\partial_1: C_1(\Sigma)\rightarrow C_0(\Sigma)$ is defined to be the unique homomorphism such that \(\partial_1(e)=I_t(e)-I_s(e)\) for each $e\in E$. To define the boundary $\partial_2: C_2(\Sigma)\rightarrow C_1(\Sigma)$, first, for any closed walk \(w=[e_1^{\sigma_1},e_2^{\sigma_2},\cdots,e_h^{\sigma_h}],e_i\in E,\sigma_i=\pm 1\), we define 
  \(
      c_w\mathrel{\mathop:}=\sum_{i=1}^h\sigma_i e_i,
\),
  then $\partial_2$ is the unique homomorphism such that \(\partial_2(f)=c_{B(f)}\) for any $f\in F$. Now, there is a simple but important equation
  \begin{equation}\label{partial 2}
      \partial_1\circ \partial_2=0.
  \end{equation}
   Any two maps $\partial_i,i=1,2$ between 3 modules $C_i, i=0,1,2$ that satisfying equation (\ref{partial 2}) are said to form a \emph{chain complex} $C_2\stackrel{\partial_2}{\longrightarrow}C_1\stackrel{\partial_1}{\longrightarrow}C_0$, in other places, there are longer type of chain complexes, which will not be used in this article. We denote $Z_1(\Sigma)\mathrel{\mathop:}=\ker\partial_1$ whose elements are called \emph{cycles} and $B_1(\Sigma)\mathrel{\mathop:}=\im\partial_2$ whose elements are called \emph{boundaries}. Equation (\ref{partial 2}) tells us that $B_1(\Sigma)\subset Z_1(\Sigma) $, in particular, \(B_1(\Sigma)\) is a normal subgroup of $Z_1(\Sigma)$, and we have the \emph{first homology group} $H_1(\Sigma)$ as the quotient group
   \begin{equation}
      H_1(\Sigma)\mathrel{\mathop:}=Z_1(\Sigma)/B_1(\Sigma).
   \end{equation}
   
   By representing the matrix of $\partial_i$ in the natural bases 
$V,E$ and $F$, we can construct a specific type of stabilizer codes known as homological quantum codes \cite{Martin}. However, instead of relying on matrix arguments, we will introduce fundamental cohomology concepts \cite{Bombin, Martin}, which offer a more concise and geometrically insightful approach. First, some algebraic remarks. If $A$ is a module over a commutative ring $R$, then the set of all homomorphisms from $A$ to $R$ is an $R$-module called the \emph{dual modules} of $A$ and is denoted by \(A^*\mathrel{\mathop:}=\text{Hom}_R(A,R)\). Now if $F$ is a free $R$-module with a finite basis \(X\), for each $x\in X$, let \(x^*: F\rightarrow R\) be the homomorphism given by \(x^*(y)=\delta_{xy}\) ( \(\forall y\in X\) ), where \(\delta_{xy}\) equals \(0\) in \( R\) if \(x\neq{y}\), and \(1_R\) if \(x=y\). Then a basic fact is that \(F^*\) is a free $R$-module with basis $\{x^*\mid x\in X\}$. Denote $C^i(\Sigma)\mathrel{\mathop:}=C^*_i(\Sigma)$, and also $(c^i,c_i)\mathrel{\mathop:}=c^i(c_i)$ for any \(c^i\in C^i(\Sigma)\) and \(c_i\in C_i(\Sigma)\), the cobondary operator $\delta_{i+1}: C^i\rightarrow C^{i+1}, (i\in \{0,1\})$ is defined by
   \begin{equation}
       (\delta_i(c^{i-1}),c_i)\mathrel{\mathop:}=(c^{i-1},\partial_i(c_i)).\quad i=1,2.
   \end{equation}
   Then, by equation (\ref{partial 2}), we have the  \emph{cochain complex} $C^2(\Sigma)\stackrel{\delta_2}{\longleftarrow}C^1(\Sigma)\stackrel{\delta_1}{\longleftarrow}C^0(\Sigma)$ with
   \begin{equation}\label{delta 2}
        \delta_2\circ \delta_1=0.
   \end{equation}
   along with so called first \emph{coholmology group} \(H^1(\Sigma)\mathrel{\mathop:}=Z^1(\Sigma)/B^1(\Sigma)\) where the \emph{cocycles} is defined by $Z^1(\Sigma)\mathrel{\mathop:}=\ker \delta_2$, and \emph{coboundaries} by \(B^1(\Sigma)\mathrel{\mathop:}=\im \delta_1\). Now, let the star of a vertex \(v\in V\) to be the set \cite{Bombin} \(\stAr(v)\mathrel{\mathop:}= \{(e,\sigma)\in E\times\{1,-1\}\mid I_t(e^\sigma)=v\}\).
    Then we have a geometric explanation of $\delta_1$ by
   \begin{equation}\label{vvt}
       \delta_1(v^*)=\sum_{(e,\sigma)\in \stAr(v)}\sigma e^*,
   \end{equation}
   which can be shown by definition and indicates that to calculate $\delta_1(v^*)$,  we have to consider only the edges that incident on $v$. When an edge $e$ points at $v$, we add $e^*$, when it starts from $v$, we add $-e^*$, when it is a self-circle on $v$, we will add nothing into the sum.

   To construct a stabilizer code, we attach one qudit to each edge of the 2-Complex, 
   %then label them by the corresponding edges's labels, 
   thus obtaining a $D^{|E|}$ dimensional Hilbert space $\mathcal{H}_{|E|}$. What we need  is to find an appropriate subgroup of $\mathcal{P}_{|E|}$. To begin, we define two sets of operators. 
   \begin{itemize}
       \item \emph{Face operators}: For each face f, we have \(\partial_2(f)=c_{B(f)}=\sum_{i=1}^h\sigma_i e_i\), where $\sigma\in\{1,-1\}$, an operator is defined by
       \begin{equation}\label{face O}
           B_f\mathrel{\mathop:}= \prod_{i=1}^h Z_i^{\sigma_i}
       \end{equation}
       with $Z_i$ the $Z$ operator on $e_i$'s qudit.
       \item \emph{Vertex operator}: For each vertex, we have equation (\ref{vvt}), an operator is defined by
       \begin{equation}\label{vertex O}
           A_v\mathrel{\mathop:}=\prod_{(e,\sigma)\in \stAr(v)}X_e^\sigma
       \end{equation}
       with $X_e$ the \(X\) operator on $e$'s qudit.
       
   \end{itemize}
   Notice that by letting some exponential $\sigma$ equal to $0$, we can rewrite product as  \(B_f= \bigotimes_{i=1}^{|E|} Z_i^{\sigma_i}\) and  \(A_v=\bigotimes_{i=1}^{|E|}X_i^{\sigma_i}\),  thus there is an $|E|$-tuple $\mathbf{v}_f=(\sigma_1,\sigma_2,\cdots,\sigma_{|E|})$ for each face operator, and an $|E|$-tuple $\mathbf{u}_v=(\sigma'_1,\sigma'_2,\cdots,\sigma'_{|E|})$ for each vertex operator. Unlike those in equation (\ref{face O}), in the tensor product form, there is possibility that for some $i$, $|\sigma_i|>1$, because the closed walk may intersect with itself, in for example the case of a non-orientable surface.  Now, the multiplication of two  operators of the same type corresponds to the addition of their \(|E|\)-tuples in \(\mathbb{Z}_{D}^{|E|}\), which indicates that the subgroup of \(\mathcal{P}_{|E|}\) generated by all the operators of the same type corresponds to a submodule  of the free module $\mathbb{Z}_D^{|E|}$ . We use $\mathcal{A}$ to represent the subgroup generated by all vertex operators and $r(\mathcal{A})$ for it's corresponding submodule, similarly, we  use $\mathcal{B}$ and $r(\mathcal{B})$ when it comes to face operators. Now, manifestly, \(r(\mathcal{B})\) ($r(\mathcal{A})$) is simply the set of coordinates of elements in $\im\partial_2$ ($\im\delta_1$) under the basis \(\{e\mid e\in E\}\)(\(\{e^*\mid e\in E\}\)). 
   \begin{lemma}\label{commu}
   The elements of \(\mathcal{B}\) commute with elements of \(\mathcal{A}\).
   \end{lemma}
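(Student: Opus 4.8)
The plan is to reduce the claim to a single commutation identity between one vertex operator $A_v$ and one face operator $B_f$, to rewrite the resulting phase as an evaluation pairing in the (co)chain complex, and then to kill that phase using $\partial_1\circ\partial_2=0$. The reduction is purely group-theoretic: the centralizer of a fixed $B_f$ in $\mathcal{P}_{|E|}$ is a subgroup, so once it contains every $A_v$ it contains the subgroup $\mathcal{A}$ generated by them; running the same argument with the roles of $\mathcal{A}$ and $\mathcal{B}$ exchanged, it suffices to prove $A_vB_f=B_fA_v$ for every vertex $v$ and every face $f$.

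For a fixed $v$ and $f$ I would use the tensor-product forms $A_v=\bigotimes_{i=1}^{|E|}X_i^{\sigma'_i}$ and $B_f=\bigotimes_{i=1}^{|E|}Z_i^{\sigma_i}$, where $\mathbf{u}_v=(\sigma'_1,\dots,\sigma'_{|E|})$ and $\mathbf{v}_f=(\sigma_1,\dots,\sigma_{|E|})$ are the coordinate tuples introduced above. The single-qudit relation $ZX=\omega XZ$ gives $X^aZ^b=\omega^{-ab}Z^bX^a$ on each qudit, so applying this on all $|E|$ tensor factors and collecting the phases yields
\begin{equation}
A_vB_f=\omega^{-\sum_{i=1}^{|E|}\sigma'_i\sigma_i}B_fA_v,
\end{equation}
and therefore $A_v$ and $B_f$ commute precisely when $\sum_{i=1}^{|E|}\sigma'_i\sigma_i\equiv 0\pmod D$.

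It remains to identify the exponent with a homological pairing and show it vanishes. By equation (\ref{vvt}), $\mathbf{u}_v$ is the coordinate vector of $\delta_1(v^*)$ in the basis $\{e^*\mid e\in E\}$, while $\mathbf{v}_f$ is the coordinate vector of $\partial_2(f)=c_{B(f)}$ in the basis $\{e\mid e\in E\}$; hence $\sum_{i}\sigma'_i\sigma_i=(\delta_1(v^*),\partial_2(f))$ under the evaluation pairing $(c^1,c_1)=c^1(c_1)$. By the defining property of the coboundary operator, $(\delta_1(v^*),\partial_2(f))=(v^*,\partial_1\partial_2(f))$, and this is $(v^*,0)=0$ because $\partial_1\circ\partial_2=0$ by equation (\ref{partial 2}). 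So the exponent is $0$ in $\mathbb{Z}_D$, the phase equals $\omega^0=1$, and $A_vB_f=B_fA_v$, which proves the lemma.

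I do not anticipate any real difficulty; the only points needing care are the direction of the sign in $X^aZ^b=\omega^{-ab}Z^bX^a$ — a sign error there would not even change the conclusion, since the exponent turns out to be exactly $0$, but it should still be recorded correctly — and the observation that when the boundary walk $B(f)$ runs over some edge more than once, so that an entry of $\mathbf{v}_f$ has absolute value larger than $1$, the identification of $\mathbf{v}_f$ with the coordinate vector of $\partial_2(f)$ remains valid verbatim, so the argument is unaffected.
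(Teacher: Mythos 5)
Your proof is correct and follows essentially the same route as the paper's: the commutation phase between $A_v$ and $B_f$ is identified with the pairing $(\delta_1(v^*),\partial_2(f))$, which vanishes because $\partial_1\circ\partial_2=0$. Your uniform bookkeeping via $X^aZ^b=\omega^{-ab}Z^bX^a$ (and the explicit reduction from generators to the subgroups $\mathcal{A},\mathcal{B}$) is a slightly cleaner packaging of the paper's case split into the sets $I^+$ and $I^-$, but it is not a different argument.
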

   \begin{proof}
   For any $f\in F$ and $v\in V$, we have \((\delta_1(v),\partial_2(f))=(v,\partial_1\circ\partial_2(f))=0\) by equations (\ref{partial 2}) and (\ref{delta 2}), which implies that the inner product \(\mathbf{v}_f\cdot\mathbf{u}_v=0\) in $\mathbb{Z}_D^{|E|}$. If we denote $g^+\mathrel{\mathop:}=\sum_{i\in{I^+}}\sigma_i\sigma'_i$ with $I^+\mathrel{\mathop:}=\{i\in\{1,2,\cdots,|E|\}\big| \sigma_i\sigma'_i>0\}$ ,  $g^-\mathrel{\mathop:}=\sum_{i\in{I^-}}|\sigma_i\sigma'_i|$ with $I^-\mathrel{\mathop:}=\{i\in\{1,2,\cdots,|E|\}\big| \sigma_i\sigma'_i<0\}$ , we have $g^+-g^-\equiv 0\mod D$. Now, from the basic relation \(ZX=\omega{XZ}\), we have \(Z^{-1}X^{-1}=\omega{X^{-1}Z^{-1}}\), \(Z^{-1}X=\omega^{-1}{XZ^{-1}}\) and \(ZX^{-1}=\omega^{-1}{X^{-1}Z}\), which shows that if we interchange $B_f$ and $A_v$, we will have $\omega^{g^+}$ and $(\omega^{-1})^{g^-}$ being generated, these gives a net phase of $1$, thus commuting.
   \end{proof}

   Let \(\mathcal{S}\) be the subgroup generated by all $B_f$ and $A_v$, then by Lemma 2 it is abelian (All face  operators commute with each other because they are all $Z$-type operators. Similarly, all vertex operators commute with each other.) so that any element $s$ of $\mathcal{S}$ can be written as
   \begin{equation} \label{b.a}
       s=b\cdot a
   \end{equation}
   with \(b\in\mathcal{B}\) and \(a\in\mathcal{A}\). Now, $b$ is generated by face operators and therefore must be a $Z$-type operator \(b=\bigotimes_{i=1}^{|E|} Z_i^{\sigma_i}\), so, when it is any phase factor of identity, it must be the identity itself, because we have $\sigma_i \equiv 0 \mod{D}$ for all $i$. 
 The same for \(a\). We conclude that $s$ cannot be any scalar multiplication other than identity. Then by Theorem 1, there is a stabilizer code $\mathcal{C}$ defined by $\mathcal{S}$  which has dimension $K=D^{|E|}/{|\mathcal{S}|}$ and is called surface code when the 2-Complex comes from a surface with or without boundary. In qubit's case, it can be further showed that the number of logical qubits contained in \(\mathcal{C}\) equals the dimension of the first homology group, i.e, $\dim H_1(\Sigma)$. However, the arguments using the dimension property of vector spaces cannot be applied in general $D$-qudit's case, for a $\mathbb{Z}_D$-module is not vector space when $D$ is not a prime. Fortunately, the next theorem shows that even for arbitrary $D$, the size of $H_1(\Sigma)$ still gives the value of \(K\).
   \begin{theorem}\label{size of homo}
   For any 2-Complex $\Sigma$, let $\mathcal{S}$ be the subgroup generated by all face and vertex operators defined by equation (\ref{face O}) and (\ref{vertex O}), then the dimension $K$ of its stabilizer code $\mathcal{C}$ equals the size of \(H_1(\Sigma)\), i.e, we have
   \begin{equation}
       K=|H_1(\Sigma)|.
   \end{equation}
   \end{theorem}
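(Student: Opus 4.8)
\section*{Proof proposal}

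The plan is to compute $|\mathcal{S}|$ explicitly in terms of the chain complex and then compare it with $|H_1(\Sigma)|=|\ker\partial_1|/|\im\partial_2|$. The first step is to show $|\mathcal{S}|=|\mathcal{B}|\cdot|\mathcal{A}|$. Since the elements of $\mathcal{B}$ and $\mathcal{A}$ commute (Lemma \ref{commu}), the multiplication map $\mu\colon\mathcal{B}\times\mathcal{A}\to\mathcal{S}$, $(b,a)\mapsto ba$, is a group homomorphism; it is surjective because every $s\in\mathcal{S}$ has the form $s=b\cdot a$ as in equation (\ref{b.a}). For injectivity, suppose $ba=I$. As $b$ lies in the group generated by the (phase-free, $Z$-type) face operators it has the form $b=\bigotimes_i Z_i^{t_i}$, and likewise $a=\bigotimes_i X_i^{s_i}$; then on the $i$-th tensor factor $Z^{t_i}X^{s_i}=I$, which forces $t_i\equiv s_i\equiv 0\pmod D$, so $b=a=I$. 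Hence $\mu$ is an isomorphism and $|\mathcal{S}|=|\mathcal{B}|\,|\mathcal{A}|$.

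Next I would identify the two factors with homological data. The correspondence $r$ described before Lemma \ref{commu} sends $\mathcal{B}$ bijectively onto the submodule $r(\mathcal{B})\subseteq\mathbb{Z}_D^{|E|}$ and turns operator multiplication into tuple addition, hence is a group isomorphism; and $r(\mathcal{B})$ is exactly the set of coordinate vectors of $\im\partial_2$ in the basis $\{e\mid e\in E\}$, so $|\mathcal{B}|=|r(\mathcal{B})|=|\im\partial_2|$. Similarly $|\mathcal{A}|=|r(\mathcal{A})|=|\im\delta_1|$, using the basis $\{e^*\mid e\in E\}$ of $C^1(\Sigma)$. Combining this with Theorem \ref{size of sta} gives $K=D^{|E|}/|\mathcal{S}|=D^{|E|}/(|\im\partial_2|\cdot|\im\delta_1|)$, so, since $|H_1(\Sigma)|=|\ker\partial_1|/|\im\partial_2|$, the theorem reduces to the single identity
\begin{equation}
|\ker\partial_1|\cdot|\im\delta_1|=D^{|E|}.
\end{equation}

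To prove this identity, I would first apply the first isomorphism theorem to $\partial_1\colon C_1(\Sigma)\to C_0(\Sigma)$ to get $|\ker\partial_1|\cdot|\im\partial_1|=|C_1(\Sigma)|=D^{|E|}$, which reduces the claim to $|\im\partial_1|=|\im\delta_1|$. The matrix of $\delta_1$ is the transpose $M^{T}$ of the integer incidence matrix $M$ of $\partial_1$ (immediate from $(\delta_1(v^*),e)=(v^*,\partial_1(e))$), so I want that reducing a matrix and its transpose mod $D$ yields images of the same cardinality. Writing the Smith normal form $M=U\Lambda V$ with $U,V$ invertible over $\mathbb{Z}$ and $\Lambda$ diagonal with entries $d_1,\dots,d_k$, reduction mod $D$ gives $\bar M=\bar U\bar\Lambda\bar V$ with $\bar U,\bar V$ invertible over $\mathbb{Z}_D$, whence $|\im\partial_1|=|\im\bar\Lambda|=\prod_{j=1}^{k}D/\gcd(d_j,D)$; applying the identical argument to $M^{T}=V^{T}\Lambda^{T}U^{T}$ (whose Smith form $\Lambda^{T}$ has the same diagonal entries) yields $|\im\delta_1|=\prod_{j=1}^{k}D/\gcd(d_j,D)=|\im\partial_1|$. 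Substituting $|\im\delta_1|=D^{|E|}/|\ker\partial_1|$ into the formula for $K$ then gives $K=|\ker\partial_1|/|\im\partial_2|=|H_1(\Sigma)|$. (An alternative finish for the third step is to observe that under the perfect pairing $C^1(\Sigma)\times C_1(\Sigma)\to\mathbb{Z}_D$ one has $\ker\partial_1=(\im\delta_1)^{\perp}$, and then invoke the duality $|H|\cdot|H^{\perp}|=D^{|E|}$ for subgroups $H\subseteq\mathbb{Z}_D^{|E|}$.)

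The only genuine obstacle is this third step: over $\mathbb{Z}_D$ there is no rank--nullity theorem to assert at once that $\partial_1$ and its transpose have "the same rank,'' so one must carry out the Smith-normal-form bookkeeping carefully and check that the cardinality of the image is transpose-invariant mod $D$. Steps one and two are essentially unwinding definitions already set up in the text.
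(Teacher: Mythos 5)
Your proposal is correct, and its first two steps (the factorization $|\mathcal{S}|=|\mathcal{B}|\,|\mathcal{A}|$ and the reduction to $|\ker\partial_1|\cdot|\im\delta_1|=D^{|E|}$) coincide with the paper's proof, though you justify the factorization via injectivity of the multiplication map $\mathcal{B}\times\mathcal{A}\to\mathcal{S}$ while the paper appeals to the fact that the operators $X^{\mathbf{x}}Z^{\mathbf{z}}$ form a basis of $L(\mathcal{H}_n)$ --- these amount to the same thing. Where you genuinely diverge is the final identity: the paper proves $|\ker\partial_1|\cdot|\im\delta_1|=D^{|E|}$ by showing that $\ker\partial_1$ corresponds to the orthogonal complement $r(\mathcal{A})^{\perp}$ under the pairing (exactly your parenthetical ``alternative finish'') and then invoking the duality $|H|\cdot|H^{\perp}|=D^{|E|}$ for submodules of $\mathbb{Z}_D^{|E|}$, which is proved in Appendix B by a character-sum argument. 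You instead combine the first isomorphism theorem, $|\ker\partial_1|\cdot|\im\partial_1|=D^{|E|}$, with a Smith-normal-form computation over $\mathbb{Z}$ showing $|\im\partial_1|=|\im\delta_1|$ after reduction mod $D$. Both arguments are sound; yours substitutes integral linear algebra for the character-sum duality and correctly identifies the real obstacle (no rank--nullity over $\mathbb{Z}_D$), while the paper's choice has the side benefit that the identification $\ker\partial_1\leftrightarrow r(\mathcal{A})^{\perp}$ and the duality lemma are reused later (Table \ref{CBS} and the distance theorems), so nothing extra is imported.
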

   \begin{proof}
   By equation (\ref{b.a}) and the property that operators of the form \(X^\mathbf{x}Z^\mathbf{z}\) (See equation (\ref{Pauli operator}).) is a basis of $L(\mathcal{H}_n)$, 
   %here \(x = (x_1,x_2,\cdots, x_{|E|}),z = (z_1, z_2, \cdots, z_{|E|})\in \mathbb{Z}^{|E|}_D\) and $X^x = \bigotimes_{i=1}^{|E|} X_i^{x_i}, Z^z = \bigotimes_{i=1}^{|E|} Z_i^{z_i} $, 
   we have \(|\mathcal{S}|=|\mathcal{B}||\mathcal{A}|=|r(\mathcal{B})||r(\mathcal{A})|\), so \(K=D^{|E|}/{|\mathcal{S}|}=|C_1(\Sigma)|/(|\im\partial_2||\im \delta_1|)\), thus we only need to prove \(|C_1(\Sigma)|/|\im \delta_1|=|\ker\partial_1|\), i.e, $|\ker\partial_1|\cdot |\im \delta_1|=D^{|E|}$. Notice that if \(x\in \ker\partial_1\), then for every \(\alpha\in\im\delta_1\), there is a \(\beta\in{C^0(\Sigma)}\) such  that $\alpha=\delta_1 \beta$, and we have $(\alpha,x)=(\beta,\partial_1x)=0$. On the other hand, if $y\in C_1(\Sigma)$ such that for all $\alpha\in\im\delta_1$, \((\alpha,y)=0\), then for all \(\beta\in C^0(\Sigma)\), we have $(\beta,\partial_1y)=(\delta_1\beta,y)=0$, which means $\partial_1y=0$, i.e, $y\in\ker\partial_1$. These together shows that the set of coordinates of the elements in $\ker\partial_1$ is the submodule \(r(\mathcal{A})^{\perp}\) of $\mathbb{Z}_D^{|E|}$. Now,
   by Theorem 3.2 in Zhao et al, \cite{Free} (See Appendix B.) we 
 have $|r(\mathcal{A})||r(\mathcal{A})^{\perp}|=D^{|E|}$, which proves our result.
   
   \end{proof}
   
   As an example, let us compute the dimension of the stabilizer subspace for a code on the projective plane $\mathbb{R}\mathbf{P}^2$ and the torus $\mathbb{T}^2$. For $\mathbb{R}\mathbf{P}^2$, a standard 2-Complex is composed of a point $v$, an edge $e$ with $I_s(e)=I_t(e)=v$, and a face $f$ with $B(f)=[e,e]$. Thus, $\partial_2(f)=e+e=2e$, and we have $\im \partial_2\simeq 2\mathbb{Z}_D$. Furthermore, since $\partial_1(e)=v-v=0$, we have $\ker \partial_1=C_1(\Sigma)\simeq \mathbb{Z}_D$. Therefore, $H_1(\Sigma)\simeq{\mathbb{Z}_D/{2\mathbb{Z}_D}}$, which has two elements when $D$ is even, yielding $K=2$, and has one element when $D$ is odd, yielding $K=1$. Thus, when $D$ is an even integer greater than 2, $K$ cannot always be written in the form of $D^k,k\in \mathbf{Z}$, and saying that "the code contains $\log_D K$ logical qudits" is directly meaningless, but may suggest more subtle representations of the logical subspace. (See for example \cite{vuillot2023homological,unknown}.) 
   For $\mathbb{T}^2$, a standard 2-Complex is composed of a point $v$, two edges ${e_1,e_2}$ with $I_s(e_i)=I_t(e_i)=v$, and a face $f$ with $B(f)=[e_1,e_2,e_1^{-1},e_2^{-1}]$. We have $\partial_1(e_1)=0$ and $\partial_2(f)=e_1+e_2-e_1-e_2=0$, which implies $\text{im},\partial_2=0$ and $\ker\partial_1=C_1(\Sigma)$. Therefore, $H_1(\Sigma)\simeq\mathbb{Z}_D^{2}$, and the code has dimension $D^2$, implying the presence of two $D$-dimensional logical qudits. Note that $H_1(\Sigma)$ only depends on the homotopy class of the surface, so these computations apply to other 2-Complex representations of the above surfaces.

   \subsection{The distance}

   We try to generate the distance of stabilizer codes to qudit's case now. Distance is a parameter that is closely related to the error correcting ability of a quantum code, therefore, we will  talk about Pauli errors and their syndromes at first. For any qudit stabilizer code \((\mathcal{C},\mathcal{S})\), suppose our logical state $|\phi\rangle \in \mathcal{C}$ suffers an Pauli error \(E \in \mathcal{P}_n\), we can define it's error syndromes as in the qubit's case. Let us fix a set of generators \(\langle s_1, s_2, \cdots, s_r\rangle\) for $\mathcal{S}$, which doesn't have to be minimal in any sense, for example, the face and vertex operators in a surface code. Then, for any integer  $1 \leq l \leq r$, the \emph{syndrome} of $E$ corresponds to $s_l$ are defined to be the integer \(0\leq\beta^E_{l}\leq D-1\) such that $Es_l = \omega^{\beta^E_{l}} s_lE$. We can obtain $\beta^E_{l}$ by measuring the observable $s_l$ for the error state \(E|\phi\rangle\), here, $s_l$ is not necessarily hermitian but we can do projective measurement whenever it is normal. 

   The next concept we need is weight. For any Pauli operator  \(P\in\mathcal{P}_n\),  we define it's \emph{support} as the set of physical qudits on which $P$ acts non-trivially, in the case of 2-Complexes, it also refers to the set of edges to which  these qudits are attached. One notable difference compared to the qubit case is that if \(D > 2\), the same support can correspond to multiple distinct Pauli operators of the same type, for example, $Z$-type operator, whereas in the qubit case, there is essentially only one $Z$-type Pauli operator corresponding to it. This is why in qubit surface codes, we usually do not distinguish a $Z$-type operator from its support, or an $X$-type operator from its support which is often represented in the dual complex (For a formal definition of dual complex in the oriented case, see \cite{Bombin}.). For another difference related to this, let us consider, for example, the Toric code shown in Figure \ref{Qudit toric code} with a chain like support for an X-type error with two ends, non-trivial syndromes can occur only in the face operators ( filled in green ) at the two ends in the qubit case, but in general, they can occur in any face operators that the support passes by, depending on the specific forms of the error.
   The \emph{weight} of $P$, being denoted by $\weight(P)$, is now the cardinality (the number of elements in our situation) of its support.

   \begin{figure}[ht]
    \centering
    \begin{tikzpicture}[scale=0.6,transform shape]
     \tikzset{middlearrow/.style={
        decoration={markings,
            mark= at position 0.6 with {\arrow[scale=1.5]{#1}} ,
        },
        postaction={decorate}
    }
}

 \filldraw[white] (-1.6,4) circle (1pt) node[anchor=south west]{$\color{blue!60}{Z^{-1}}$};
  \filldraw[white] (-0.1,3) circle (1pt) node[anchor= west]{$\color{blue!60}{Z}$};
  \filldraw[white] (-1.6,2) circle (1pt) node[anchor=north west ]{$\color{blue!60}{Z}$};
   \filldraw[white] (-1.9,3) circle (1pt) node[anchor= east]{$\color{blue!60}{Z^{-1}}$};
  
  \filldraw[white] (4.4,2) circle (1pt) node[anchor=south west]{$\color{blue!60}{Z^{-1}}$};
  \filldraw[white] (5.9,1) circle (1pt) node[anchor= west]{$\color{blue!60}{Z}$};
  \filldraw[white] (4.4,0) circle (1pt) node[anchor=north west ]{$\color{blue!60}{Z}$};
  \filldraw[white] (4.1,1) circle (1pt) node[anchor= east]{$\color{blue!60}{Z^{-1}}$};

     \foreach \i in {-1,...,4}
      \foreach \j in {-1,...,3}{
        \fill[black] (2*\i,2*\j) circle(2pt);
        \ifnum \i > -1
         \draw[middlearrow={latex}] (2*\i-2,2*\j) -- (2*\i,2*\j);
         \fi
         %\ifnum \i = 3
          %\draw[middlearrow={latex}] (2*\i,2*\j) -- (2*\i+2,2*\j);
          %\fi
          \ifnum \j < 3
          \draw[middlearrow={latex}] (2*\i,2*\j) -- (2*\i,2*\j+2);
          \fi
        };
        \foreach \i in {0,...,4}
      \foreach \j in {-1,...,2}{
      \draw[middlearrow={>},black, opacity=0.2](2*\i-1,2*\j+1.6) arc (90:380:0.6);
      }

\fill[green,opacity=0.2](-2,2) rectangle (0,4);
  \fill[green,opacity=0.2](4,0) rectangle (6,2);
  \fill[orange,opacity=0.1](0,2) rectangle (2,4);
  \fill[orange,opacity=0.1](2,2) rectangle (4,4);
  \fill[orange,opacity=0.1](4,2) rectangle (6,4);
  \draw[line width=1.5,line cap=round, red!40, opacity=0.7] (-1,3) -- (5,3) -- (5,1);
       
    \end{tikzpicture}
    \caption{A support of a $X$-type operator (in red) in a qudit Toric code  }
    \label{Qudit toric code}
\end{figure}

   We consider now the Pauli operators that leave the code space $\mathcal{C}$ invariant, these are the operators in $\mathcal{P}_n$ which commute with all elements in $\mathcal{S}$, i.e. the elements in the centraliser $C(\mathcal{S})$. Denote $N(\mathcal{S})$ as the normaliser of $\mathcal{S}$, we have 
   \begin{lemma}
       \(N(\mathcal{S}) = C(\mathcal{S})\).
   \end{lemma}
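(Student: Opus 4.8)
The plan is to establish the two inclusions separately. The inclusion $C(\mathcal{S})\subseteq N(\mathcal{S})$ is immediate: if $g$ commutes with every $s\in\mathcal{S}$ then $gsg^{-1}=s\in\mathcal{S}$ for all $s$, so $g\in N(\mathcal{S})$. All the content is therefore in the reverse inclusion $N(\mathcal{S})\subseteq C(\mathcal{S})$, and the key tool is the elementary commutation behaviour of Pauli products already used in the proof of Lemma \ref{commu}.

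First I would record the general commutation fact. Writing two arbitrary Pauli products as $P=\omega^{\lambda}X^{\mathbf{x}}Z^{\mathbf{z}}$ and $Q=\omega^{\mu}X^{\mathbf{x}'}Z^{\mathbf{z}'}$ and using only the relation $ZX=\omega XZ$ (together with the consequences $Z^{-1}X=\omega^{-1}XZ^{-1}$ etc.\ noted in Lemma \ref{commu}), one checks that $PQ=\omega^{k}QP$ with $k=\mathbf{x}\cdot\mathbf{z}'-\mathbf{x}'\cdot\mathbf{z}\in\mathbb{Z}_D$. Equivalently, $PQP^{-1}=\omega^{k}Q$: conjugating one Pauli product by another returns the same product multiplied by a phase which is a power of $\omega$, and in particular the phase lies in the cyclic group $\langle\omega\rangle$.

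Now take $g\in N(\mathcal{S})$ and an arbitrary $s\in\mathcal{S}$. By the previous step $gsg^{-1}=\omega^{c}s$ for some $c\in\mathbb{Z}_D$. Since $g\in N(\mathcal{S})$ we have $gsg^{-1}\in\mathcal{S}$, and since $\mathcal{S}$ is a subgroup, $s^{-1}\in\mathcal{S}$; hence $\omega^{c}I=(gsg^{-1})s^{-1}\in\mathcal{S}$. But $(\mathcal{C},\mathcal{S})$ being a stabilizer code forces $\mathcal{S}$ to contain no scalar multiplication other than the identity (this is the necessary condition for $\mathcal{C}\neq\{0\}$ discussed before Theorem \ref{size of sta}), so $\omega^{c}=1$ and therefore $gsg^{-1}=s$, i.e.\ $g$ commutes with $s$. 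As $s\in\mathcal{S}$ was arbitrary, $g\in C(\mathcal{S})$, which gives $N(\mathcal{S})\subseteq C(\mathcal{S})$ and hence the desired equality.

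I do not expect a genuine obstacle here; the proof is short. The only point that needs a little care is the first step, namely verifying that conjugating a Pauli product by a Pauli product never produces a phase outside $\langle\omega\rangle$ — this is what rules out, for instance, a conjugation that would multiply $s$ by $-1$ when $D$ is odd — and this is handled by the explicit $\omega$-bookkeeping inherited from $ZX=\omega XZ$, exactly as in Lemma \ref{commu}. Everything else is a formal manipulation using that $\mathcal{S}$ is a subgroup with no nontrivial scalars.
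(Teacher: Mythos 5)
Your proof is correct and follows essentially the same route as the paper: both arguments observe that conjugating one Pauli product by another yields the same operator times a power of $\omega$, so for $g\in N(\mathcal{S})$ the element $gsg^{-1}=\omega^{c}s$ must lie in $\mathcal{S}$, forcing $\omega^{c}=1$ since $\mathcal{S}$ contains no nontrivial scalars. Your version merely makes the phase bookkeeping ($k=\mathbf{x}\cdot\mathbf{z}'-\mathbf{x}'\cdot\mathbf{z}$, up to sign convention) more explicit than the paper does.
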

   \begin{proof}
       Consider $g\in N(\mathcal{S})$,  by the definition of normaliser, we have $\forall s \in \mathcal{S}$, there is an $s'\in \mathcal{S}$ such that \(gsg^\dag = s'\). Notice that there is also an integer \(\beta\) such that \(gsg^\dag=\omega^\beta sgg^\dag=\omega^\beta s\), we have \(s' = \omega^\beta s\) which means \(s'=s\), for otherwise we would have non-identity scalar multiplication in $S$. We proved \(N(\mathcal{S}) \subset C(\mathcal{S})\). The another side is trivial.
   \end{proof} 
\noindent As in the qubit case, by measuring the syndrome of $E$, we can cook up a possible correcting operator \(E'\) by  algorithms like Minimum weight matching decoder, hoping that 
\begin{equation}\label{requ}
    E'^\dag E|\phi\rangle = e^{i\theta} |\phi\rangle .
\end{equation}
Our algorithms only guarantee that $E'^\dag E\in N(\mathcal{S})$, which is true if and only if that the syndrome of \(E'\) equals the syndrome of $E$. However, when $E'^\dag E\notin \langle\omega I\rangle\mathcal{S}$, equation (\ref{requ}) may not be true for all encoded states \(|\phi\rangle\), because we have 
\begin{theorem}\label{L trans}
    Any element $R$ in $N(\mathcal{S})\setminus \langle\omega I\rangle\mathcal{S}$ will induce a linear transformation on $\mathcal{C}$ beyond scalar multiplications. 
\end{theorem}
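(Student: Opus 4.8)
\section*{Proof proposal}

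The plan is to establish the contrapositive: if $R\in N(\mathcal{S})$ acts on $\mathcal{C}$ as a scalar multiplication $c\cdot\mathrm{id}_{\mathcal{C}}$, then $R\in\langle\omega I\rangle\mathcal{S}$. The two ingredients will be the averaging projector onto the code space and the fact, already invoked in the proof of Theorem~\ref{size of homo}, that the Pauli monomials $X^{\mathbf{x}}Z^{\mathbf{z}}$, $\mathbf{x},\mathbf{z}\in\mathbb{Z}_D^{n}$, form a basis of $L(\mathcal{H}_n)$.

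First I would set $P=\frac{1}{|\mathcal{S}|}\sum_{s\in\mathcal{S}}s$ and record that, since $\mathcal{S}$ is a finite group of unitaries, $P$ is the orthogonal projector onto $\mathcal{C}$: it is Hermitian because $\mathcal{S}$ is closed under taking adjoints ($s^{\dagger}=s^{-1}\in\mathcal{S}$), idempotent because left translation permutes $\mathcal{S}$, and its image is exactly the common $+1$-eigenspace of $\mathcal{S}$, namely $\mathcal{C}$. Since $N(\mathcal{S})=C(\mathcal{S})$ (the preceding lemma), $R$ commutes with every $s\in\mathcal{S}$, hence with $P$, and moreover $R$ maps $\mathcal{C}$ into $\mathcal{C}$. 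Combining $RP=PR$ with $P^{2}=P$ gives $RP=PRP$; and since $P|v\rangle\in\mathcal{C}$ and $R$ acts on $\mathcal{C}$ as $c$, we get $PRP|v\rangle=cP|v\rangle$ for every $|v\rangle$. Thus we obtain the operator identity $RP=cP$, i.e.
\begin{equation}
    \sum_{s\in\mathcal{S}}Rs \;=\; c\sum_{s\in\mathcal{S}}s .
\end{equation}

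The last step is a disjoint-support argument. Write $R=\omega^{\lambda}X^{\mathbf{x}}Z^{\mathbf{z}}$ and $s=\omega^{\mu_s}X^{\mathbf{a}_s}Z^{\mathbf{b}_s}$ for $s\in\mathcal{S}$; multiplying out, $Rs$ is a nonzero scalar times the monomial $X^{\mathbf{x}+\mathbf{a}_s}Z^{\mathbf{z}+\mathbf{b}_s}$. Because $\mathcal{S}$ contains no nontrivial scalar, the assignment $s\mapsto(\mathbf{a}_s,\mathbf{b}_s)$ is injective, so $T=\{(\mathbf{a}_s,\mathbf{b}_s)\mid s\in\mathcal{S}\}$ is a subgroup of $\mathbb{Z}_D^{2n}$ of size $|\mathcal{S}|$; the left-hand side of the displayed identity is then a nonzero linear combination of the basis monomials indexed by the coset $(\mathbf{x},\mathbf{z})+T$, and the right-hand side is a nonzero linear combination of those indexed by $T$ itself. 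If $(\mathbf{x},\mathbf{z})\notin T$ these two cosets are disjoint, so the two sides involve disjoint basis monomials and can be equal only if both vanish --- a contradiction. Hence $(\mathbf{x},\mathbf{z})=(\mathbf{a}_{s_0},\mathbf{b}_{s_0})$ for some $s_0\in\mathcal{S}$, which forces $Rs_0^{-1}$ to be a scalar element of $\mathcal{P}_n$, necessarily a power of $\omega I$; therefore $R\in\langle\omega I\rangle\mathcal{S}$, completing the contrapositive.

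I expect the only genuinely delicate point to be the passage from ``$R$ is scalar on $\mathcal{C}$'' to the operator identity $RP=cP$ on all of $\mathcal{H}_n$ (which crucially uses $R\in C(\mathcal{S})$, so that $R$ commutes with $P$); once that is in hand the rest is bookkeeping, the key observation being that distinctness of the vectors $(\mathbf{a}_s,\mathbf{b}_s)$ attached to the elements of $\mathcal{S}$ is exactly what the hypothesis ``$\mathcal{S}$ has no nontrivial scalar'' provides, and that this is precisely what lets linear independence of the $X^{\mathbf{x}}Z^{\mathbf{z}}$ finish the argument.
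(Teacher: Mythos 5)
Your proof is correct, but it takes a genuinely different route from the paper's. The paper argues by contradiction with a counting argument: if $R$ acted as the scalar $e^{i\theta}$ on $\mathcal{C}$, then $s'=e^{-i\theta}R$ together with $\mathcal{S}$ would generate a strictly larger, scalar-free group $\mathcal{S}'$ stabilizing all of $\mathcal{C}$, and the modified version of Theorem \ref{size of sta} (the remark at the end of Appendix A, needed because $\mathcal{S}'$ need not lie inside $\mathcal{P}_n$) then forces the stabilizer subspace of $\mathcal{S}'$ to be strictly smaller than $\mathcal{C}$ --- a contradiction. You instead prove the contrapositive directly: from the operator identity $RP=cP$ you expand both sides in the basis of Pauli monomials $X^{\mathbf{x}}Z^{\mathbf{z}}$ and use the disjointness of the cosets $T$ and $(\mathbf{x},\mathbf{z})+T$ (where injectivity of $s\mapsto(\mathbf{a}_s,\mathbf{b}_s)$ is exactly the no-nontrivial-scalars hypothesis) to force $(\mathbf{x},\mathbf{z})\in T$, hence $R\in\langle\omega I\rangle\mathcal{S}$. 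Your argument stays entirely inside $\mathcal{P}_n$, avoids introducing the auxiliary group $\mathcal{S}'$ and the strengthened form of Theorem \ref{size of sta} (you need only the averaging projector and linear independence of the Pauli monomials, both of which the paper already invokes elsewhere), and it pinpoints exactly which coset of $\mathcal{S}$ contains $R$; the paper's argument is shorter once the modified theorem is in place and would also apply to stabilizing operators outside the Pauli group. One small simplification available to you: the identity $RP=cP$ follows immediately from $P|v\rangle\in\mathcal{C}$ and the scalar action of $R$ on $\mathcal{C}$, so the detour through $RP=PRP$ (and even the commutation $RP=PR$) is not actually needed for that step.
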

\begin{proof}
    Suppose we have $R|\phi\rangle= e^{i\theta} |\phi\rangle$ for all $|\phi\rangle\in \mathcal{C}$, then $s'=e^{-{i\theta}}R$ should also stabilize $\mathcal{C}$ as well as the group $\mathcal{S}'$ which is generalised by $s'$ and all elements in $\mathcal{S}$. Thus, $\mathcal{S}'$ contains no non-identity scalar multiplication, otherwise its stabilizer subspace would be $\{0\}$. On the other hand, $s'$ must lies outside $S$, otherwise we would have \(R\in\langle\omega I\rangle\mathcal{S}\), for any two operators in \(\mathcal{P}_n\) that differ by a phase factor should differ by some $\omega^\beta I$. Now, by a modified version of Theorem \ref{size of sta} 
  (See the remark at the end of Appendix A. We resort to this stronger version because $S'$ may not be any subgroup of $\mathcal{P}_n$.), the stabilizer subspace \(\mathcal{C}'\) of $\mathcal{S}'$ is strictly smaller than \(\mathcal{C}\), which is a contradiction. 
\end{proof}
\noindent This indicates that \emph{logical errors} may still exist even after error correcting procedures. ( Theorem \ref{L trans}'s object of study is generally called \emph{undetectable errors} \cite{forlivesi2023performance} because they have zero syndromes and thus cannot be detected by syndrome measurements.) Now, we define the \emph{distance} $d$ of the code to be the lower bound of the weights of operators in $N(\mathcal{S})\setminus \langle\omega I\rangle\mathcal{S}$:
\begin{equation}
    d=\min_{R\in N(\mathcal{S})\setminus \langle\omega I\rangle\mathcal{S} }\weight(R). 
\end{equation}
\noindent If $\weight(E) < \frac{d}{2}$,  Minimum weight matching decoder will produce a corrector $E'$ such that $\weight(E')\leq \weight(E) < \frac{d}{2}$, then we have \(\weight(E'^\dag E)\leq \weight(E'^\dag)+ \weight(E) < d\), which means that $E'^\dag E\in \langle\omega I\rangle\mathcal{S}$ and thus the success of error correction. Intuitively speaking, Pauli errors with bigger weight are less likely to happen than those with smaller weight, if $d$ is big enough, errors with $\weight(E) < \frac{d}{2}$ will dominate, thus the logical error rate after error correction is small. Therefore, $d$ is a parameter that roughly describe the error correcting ability of the code. 

In the case of a homological quantum code \((\mathcal{S},\mathcal{C})\) from a 2-Complex $\Sigma$ described in section 3.1, $d$ can be given by pure geometric parameters of $\Sigma$. To begin with, we undertake some preparatory work. 
\begin{lemma}\label{Z,X}
    A Pauli operator \(P=\omega^\lambda{X^\mathbf{x}Z^\mathbf{z}}\) in equation (\ref{Pauli operator}) lies in \(N(\mathcal{S})\) if and only if $\mathbf{x}\in r(\mathcal{B})^\perp$ and $\mathbf{z}\in r(\mathcal{A})^\perp$. Moreover, $P$ lies in $\langle\omega I\rangle\mathcal{S}$ if and only if $\mathbf{x}\in r(\mathcal{A})$ and $\mathbf{z}\in r(\mathcal{B})$.
\end{lemma}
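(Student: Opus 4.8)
The plan is to reduce both statements to the commutation law for Pauli products together with the $\mathbb{Z}_D$-bilinearity of the standard pairing on $\mathbb{Z}_D^{|E|}$, so that no dimension theory is needed and the argument works for every $D\geq 2$. From $ZX=\omega XZ$ applied componentwise one gets, for tensor-product operators, $Z^{\mathbf b}X^{\mathbf a}=\omega^{\mathbf b\cdot\mathbf a}X^{\mathbf a}Z^{\mathbf b}$, hence $X^{\mathbf a}Z^{\mathbf b}\cdot X^{\mathbf a'}Z^{\mathbf b'}=\omega^{\mathbf b\cdot\mathbf a'}X^{\mathbf a+\mathbf a'}Z^{\mathbf b+\mathbf b'}$, and therefore $(X^{\mathbf a}Z^{\mathbf b})(X^{\mathbf a'}Z^{\mathbf b'})=\omega^{\,\mathbf b\cdot\mathbf a'-\mathbf b'\cdot\mathbf a}\,(X^{\mathbf a'}Z^{\mathbf b'})(X^{\mathbf a}Z^{\mathbf b})$. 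In particular the overall phase $\omega^\lambda$ of $P=\omega^\lambda X^{\mathbf x}Z^{\mathbf z}$ is irrelevant to commutation, and $P$ commutes with the vertex operator $A_v=X^{\mathbf u_v}$ iff $\mathbf z\cdot\mathbf u_v\equiv 0\pmod D$ and with the face operator $B_f=Z^{\mathbf v_f}$ iff $\mathbf v_f\cdot\mathbf x\equiv 0\pmod D$. Since $\mathcal{S}$ is generated by the $A_v$ and $B_f$, and a group element commutes with a subgroup iff it commutes with a generating set, and since $r(\mathcal{A})$ (resp.\ $r(\mathcal{B})$) is by construction the $\mathbb{Z}_D$-submodule of $\mathbb{Z}_D^{|E|}$ spanned by the tuples $\mathbf u_v$ (resp.\ $\mathbf v_f$), I would then invoke the elementary fact that a linear functional annihilating a generating set annihilates the submodule it generates: this converts ``$\mathbf z\cdot\mathbf u_v\equiv 0$ for all $v$'' into $\mathbf z\in r(\mathcal{A})^\perp$ and ``$\mathbf v_f\cdot\mathbf x\equiv 0$ for all $f$'' into $\mathbf x\in r(\mathcal{B})^\perp$. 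Together with the lemma $N(\mathcal{S})=C(\mathcal{S})$ this gives the first assertion.

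For the second assertion I would use the factorisation $s=b\cdot a$ of equation (\ref{b.a}). Every $a\in\mathcal{A}$ is a pure $X$-type operator $X^{\mathbf u}$ with $\mathbf u\in r(\mathcal{A})$ and every $b\in\mathcal{B}$ is a pure $Z$-type operator $Z^{\mathbf v}$ with $\mathbf v\in r(\mathcal{B})$, since multiplying $X$'s among themselves, or $Z$'s among themselves, never produces a phase. Hence $s=Z^{\mathbf v}X^{\mathbf u}=\omega^{\mathbf v\cdot\mathbf u}X^{\mathbf u}Z^{\mathbf v}$, so in the normal form of equation (\ref{Pauli operator}) every element of $\mathcal{S}$, hence of $\langle\omega I\rangle\mathcal{S}$, has its $X$-part in $r(\mathcal{A})$ and its $Z$-part in $r(\mathcal{B})$; to make this rigorous I would recall, as in the proof of Theorem \ref{size of homo}, that the operators $X^{\mathbf a}Z^{\mathbf b}$ are linearly independent in $L(\mathcal{H}_{|E|})$, so the pair $(\mathbf x,\mathbf z)$ is determined by $P$ up to the overall scalar, and (as in the proof of Theorem \ref{L trans}) two Pauli products agreeing up to a scalar agree up to some $\omega^\beta I$. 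Conversely, given $\mathbf x\in r(\mathcal{A})$ and $\mathbf z\in r(\mathcal{B})$, take $a=X^{\mathbf x}\in\mathcal{A}$ and $b=Z^{\mathbf z}\in\mathcal{B}$; then $b\cdot a=\omega^{\mathbf z\cdot\mathbf x}X^{\mathbf x}Z^{\mathbf z}\in\mathcal{S}$, whence $P=\omega^\lambda X^{\mathbf x}Z^{\mathbf z}=\omega^{\lambda-\mathbf z\cdot\mathbf x}(b\cdot a)\in\langle\omega I\rangle\mathcal{S}$.

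I do not expect a genuine obstacle: the content is bookkeeping with the commutation phase. The one point deserving care is the upgrade from ``commutes with every $A_v$ and $B_f$'' to the submodule-orthogonality conditions when $D$ is composite, but this is just the statement that the annihilator of a generating set under a $\mathbb{Z}_D$-bilinear pairing equals the annihilator of the submodule it generates, which needs no appeal to the dimension theory of vector spaces. Accordingly I would present the proof in the order above: derive the commutation formula, reduce to generators and apply the bilinearity upgrade for the $N(\mathcal{S})$ statement, then use the $b\cdot a$ decomposition together with linear independence of the $X^{\mathbf a}Z^{\mathbf b}$ for the $\langle\omega I\rangle\mathcal{S}$ statement.
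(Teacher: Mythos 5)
Your proposal is correct and follows essentially the same route as the paper's proof, which derives the commutation relation $Z^{\mathbf z}X^{\mathbf x}=\omega^{\mathbf x\cdot\mathbf z}X^{\mathbf x}Z^{\mathbf z}$ from Lemma \ref{commu} and then declares the first part proved and the second part obvious. You have simply supplied the details the paper leaves implicit — the reduction to the generators $A_v,B_f$, the passage from annihilating a generating set to annihilating the submodule, and the use of the $b\cdot a$ decomposition with linear independence of the $X^{\mathbf a}Z^{\mathbf b}$ for the second claim — all of which are sound.
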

 \begin{proof}
     Recall the proof of  Lemma \ref{commu}, we conclude that $Z^\mathbf{z}X^\mathbf{x} = \omega^{\mathbf{x}\cdot\mathbf{z}} X^\mathbf{x}Z^\mathbf{z}$ holds for any $\mathbf{x},\mathbf{z}
     \in \mathbb{Z}^{|E|}_D$, which indicates that \(Z^\mathbf{z}\) and \(X^\mathbf{x}\) commute if and only if \(\mathbf{x}\cdot\mathbf{z}=0\) in \(\mathbb{Z}_D\). Following this property, the first part of the lemma is proved. The second part of the lemma is obvious.
 \end{proof}

 \noindent By lemma \ref{Z,X}, we have
 \begin{theorem}\label{weight CSS}
     The distance of the code can be expressed as
     \begin{equation}
         d = \min_{\mathbf{a}\in W}\weight(\mathbf{a}), 
     \end{equation}
     where \(\weight(\mathbf{a})\) which is called \emph{Hamming weight of $\mathbf{a}$} is the number of the nonzero components of the tuple $\mathbf{a}$, and $W=r(\mathcal{B})^\perp\setminus r(\mathcal{A})\cup r(\mathcal{A})^\perp\setminus r(\mathcal{B})$.
 \end{theorem}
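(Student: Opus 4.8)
The plan is to push everything through the dictionary between Pauli operators and their $|E|$-tuples and then just read the statement off Lemma \ref{Z,X}. Recall that for $R=\omega^\lambda X^{\mathbf{x}}Z^{\mathbf{z}}$ the support of $R$ is exactly $\mathrm{supp}(\mathbf{x})\cup\mathrm{supp}(\mathbf{z})$, so $\weight(R)=|\mathrm{supp}(\mathbf{x})\cup\mathrm{supp}(\mathbf{z})|$; in particular $\weight(\mathbf{x})\le\weight(R)$ and $\weight(\mathbf{z})\le\weight(R)$, with $\weight(R)=\weight(\mathbf{x})$ when $\mathbf{z}=\mathbf 0$. I will prove the two inequalities $d\le\min_{\mathbf{a}\in W}\weight(\mathbf{a})$ and $d\ge\min_{\mathbf{a}\in W}\weight(\mathbf{a})$ separately, with the usual convention that a minimum over the empty set is $+\infty$ (this covers the degenerate case $W=\varnothing$, in which $N(\mathcal{S})\setminus\langle\omega I\rangle\mathcal{S}$ is also empty by the argument below).

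For $d\le\min_{\mathbf{a}\in W}\weight(\mathbf{a})$ I would take $\mathbf{a}\in W$ of minimal Hamming weight and exhibit a witness of the same weight. If $\mathbf{a}\in r(\mathcal{B})^\perp\setminus r(\mathcal{A})$, set $R=X^{\mathbf{a}}$, i.e. $\mathbf{x}=\mathbf{a}$, $\mathbf{z}=\mathbf 0$; since $\mathbf 0\in r(\mathcal{A})^\perp$, Lemma \ref{Z,X} gives $R\in N(\mathcal{S})$, and since $\mathbf{a}\notin r(\mathcal{A})$ the same lemma gives $R\notin\langle\omega I\rangle\mathcal{S}$, so $R\in N(\mathcal{S})\setminus\langle\omega I\rangle\mathcal{S}$ with $\weight(R)=\weight(\mathbf{a})$, whence $d\le\weight(\mathbf{a})$. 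If instead $\mathbf{a}\in r(\mathcal{A})^\perp\setminus r(\mathcal{B})$, the symmetric choice $R=Z^{\mathbf{a}}$ does the job.

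For $d\ge\min_{\mathbf{a}\in W}\weight(\mathbf{a})$ I would take an arbitrary $R=\omega^\lambda X^{\mathbf{x}}Z^{\mathbf{z}}\in N(\mathcal{S})\setminus\langle\omega I\rangle\mathcal{S}$. By Lemma \ref{Z,X} we have $\mathbf{x}\in r(\mathcal{B})^\perp$ and $\mathbf{z}\in r(\mathcal{A})^\perp$, while failure of membership in $\langle\omega I\rangle\mathcal{S}$ says $\mathbf{x}\notin r(\mathcal{A})$ or $\mathbf{z}\notin r(\mathcal{B})$. In the first case $\mathbf{x}\in r(\mathcal{B})^\perp\setminus r(\mathcal{A})\subseteq W$ and $\weight(\mathbf{x})\le\weight(R)$; in the second case $\mathbf{z}\in r(\mathcal{A})^\perp\setminus r(\mathcal{B})\subseteq W$ and $\weight(\mathbf{z})\le\weight(R)$. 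Either way $\min_{\mathbf{a}\in W}\weight(\mathbf{a})\le\weight(R)$, and taking the infimum over $R$ finishes this direction. Combining the two inequalities yields the theorem.

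There is no substantial obstacle; the real content sits in Lemma \ref{Z,X}. The one point to handle carefully is the \emph{logical form} of ``$R\notin\langle\omega I\rangle\mathcal{S}$'': it is the disjunction $\mathbf{x}\notin r(\mathcal{A})$ \emph{or} $\mathbf{z}\notin r(\mathcal{B})$, not a conjunction, which is precisely why $W$ is a union of two sets and why a pure $X$-type (resp. $Z$-type) operator already realizes the minimum. It is also worth remarking that $\weight(\mathbf{a})\ge 1$ for every $\mathbf{a}\in W$, since $\mathbf 0$ lies in every submodule, so all the minima above range over positive integers.
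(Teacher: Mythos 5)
Your proposal is correct and follows essentially the same route as the paper's own proof: both directions are read off from Lemma \ref{Z,X}, with the lower bound on $d$ obtained by noting that at least one of $\mathbf{x},\mathbf{z}$ lands in $W$, and the upper bound by exhibiting the pure $X$-type or $Z$-type witness $X^{\mathbf{a}}$ or $Z^{\mathbf{a}}$. Your added care about the disjunctive form of ``$R\notin\langle\omega I\rangle\mathcal{S}$'' and the empty-$W$ convention is a minor tidying of the same argument.
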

 \begin{proof}
     we set \(d'=\min\{\weight(\mathbf{a})\big| \mathbf{a}\in W\}\). For any $R=\omega^\lambda{X^\mathbf{x}Z^\mathbf{z}}$  in $ N(\mathcal{S})\setminus \langle\omega I\rangle\mathcal{S}$, by Lemma \ref{Z,X}, at least one of the tuples $\mathbf{x}$ and $\mathbf{z}$ must lie in $W$, say $\mathbf{x}$, thus $\weight(R)\geq \weight(\mathbf{x})\geq d'$. Because $R$ is arbitrary, we have $d\geq d'$. On the other hand, $W$ is finite, there exist an \(\mathbf{a} \in W\) such that $\weight(\mathbf{a})=d'$, without loss of generality, we assume $\mathbf{a}\in r(\mathcal{B})^\perp\setminus r(\mathcal{A})$, then by Lemma \ref{Z,X}, we have $X^\mathbf{a}\in N(\mathcal{S})\setminus \langle\omega I\rangle\mathcal{S}$, thus $d'=\weight(X^\mathbf{a})\geq d$. We conclude that \(d'=d\).
 \end{proof}

 \noindent We refer to Theorem \ref{weight CSS} as a theorem because it applies to general CSS codes, not only homological quantum codes. For its qubit version and a systematic treatment of qubit CSS codes  in a similar style to this article, see Breuckmann\cite{breuckmann2018phd}. The next theorem whose qubit version is well-known is the core this subsection, it specifically targets homological quantum codes. To begin, recall that the set of coordinates of the elements in $\im \partial_2,\im\delta_1,\ker\partial_1$ under the basis of edges are the submodules \(r(\mathcal{B}),r(\mathcal{A}),r(\mathcal{A})^{\perp}\) of $\mathbb{Z}_D^{|E|}$. It is also not hard to show that $r(\mathcal{B})^\perp$ corresponds to $\ker\delta_2$ in this way. We list these correspondences in Table \ref{CBS}.

 \begin{table}[h]
 \captionsetup{skip=5pt}
 \caption{Correspondence between submodules}
 \label{CBS}
 \centering
 \renewcommand\arraystretch{1.3}
 
\begin{tabular}{ |c|c|c|c| @{\extracolsep{\fill}}|}

\hline
submodule of \(C_1{(\Sigma)}\)  & submodule of \(\mathbb{Z}^{|E|}_D\) \\
\hline
%&\\
$\ker\partial_1$ & \(r(\mathcal{A})^\perp\) \\ 
\(\im\partial_2\) & \(r(\mathcal{B})\)  \\
%&\\
\hline\hline
 submodule of \(C^1{(\Sigma)}\) &  submodule of \(\mathbb{Z}^{|E|}_D\) \\
 \hline
 %&\\
 \(\ker\delta_2\) & \(r(\mathcal{B})^\perp\) \\
 \(\im\delta_1\) & \(r(\mathcal{A})\) \\ 
 %&\\
\hline
\end{tabular}
\end{table}

\begin{theorem}
     For any homological quantum code in section 3.1, its distance is the smallest number chosen between the length \(l(c)\) of the shortest  nontrivial cycle $c$ and length \(l(c')\) of the shortest  nontrivial cocycle \(c'\), i.e, 
     \begin{equation}\label{d of homo}
          d=\min\{\min_{c\in \ker\partial_1\setminus \im\partial_{2}}l(c), \quad\min_{ c'\in \ker\delta_{2}\setminus \im\delta_{1}}l(c')\},
     \end{equation}
     where the length $l(c)$ is defined as the weight of coordinate $|E|$-tuple of $c$ in $\mathbb{Z}^{|E|}_D$, similar for $l(c')$, and the term `nontrivial' refers to fact that $c$ ($c'$) is not a boundary (coboundary).
\end{theorem}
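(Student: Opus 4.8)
The plan is to derive this statement by pure translation, feeding Theorem~\ref{weight CSS} through the dictionary of Table~\ref{CBS}; essentially no new computation is required. Recall that Theorem~\ref{weight CSS} already gives $d=\min_{\mathbf{a}\in W}\weight(\mathbf{a})$ with $W=\bigl(r(\mathcal{B})^\perp\setminus r(\mathcal{A})\bigr)\cup\bigl(r(\mathcal{A})^\perp\setminus r(\mathcal{B})\bigr)$, the Hamming weight being taken in $\mathbb{Z}_D^{|E|}$. On the other hand, sending a chain in $C_1(\Sigma)$, respectively a cochain in $C^1(\Sigma)$, to its $|E|$-tuple of coordinates in the edge basis, respectively the dual edge basis, is a module isomorphism, and by the very definitions of $l(c)$ and $l(c')$ it carries length to Hamming weight. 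So the theorem reduces to identifying the four submodules listed in Table~\ref{CBS} under this isomorphism and then splitting the minimum over the union $W$ into a minimum of two minima.

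First I would assemble Table~\ref{CBS}. Three of its entries are already in hand: $\im\partial_2\leftrightarrow r(\mathcal{B})$ and $\im\delta_1\leftrightarrow r(\mathcal{A})$ are immediate from the definitions of $r(\mathcal{B})$ and $r(\mathcal{A})$ given when the code was constructed, and $\ker\partial_1\leftrightarrow r(\mathcal{A})^\perp$ was shown inside the proof of Theorem~\ref{size of homo}. The one remaining entry $\ker\delta_2\leftrightarrow r(\mathcal{B})^\perp$ I would establish by the dual-pairing argument parallel to that proof: a cochain $c^1$ with coordinate tuple $\mathbf{x}$ satisfies $\delta_2(c^1)=0$ iff $(c^1,\partial_2(f))=0$ for every face $f$, which in coordinates says $\mathbf{x}\cdot\mathbf{v}_f=0$ for all $f$, i.e. $\mathbf{x}\in r(\mathcal{B})^\perp$ since the $\mathbf{v}_f$ generate $r(\mathcal{B})$. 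This single identification is the only step with any real content, and I expect it to be the main (mild) obstacle; everything else is bookkeeping.

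With all four correspondences fixed, the coordinate isomorphism restricts to a weight-preserving bijection between $\ker\partial_1\setminus\im\partial_2$ and $r(\mathcal{A})^\perp\setminus r(\mathcal{B})$, and, on the cochain side, between $\ker\delta_2\setminus\im\delta_1$ and $r(\mathcal{B})^\perp\setminus r(\mathcal{A})$; under it $l$ equals the Hamming weight. Hence $\min_{\mathbf{a}\in r(\mathcal{A})^\perp\setminus r(\mathcal{B})}\weight(\mathbf{a})=\min_{c\in\ker\partial_1\setminus\im\partial_2}l(c)$ and similarly for the cocycle term, and since the minimum over a union $A_1\cup A_2$ is $\min\{\min_{A_1},\min_{A_2}\}$, formula~(\ref{d of homo}) follows directly from Theorem~\ref{weight CSS}. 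I would also record the convention that an empty difference set contributes $+\infty$, so that if $\Sigma$ admits no nontrivial cycle, or no nontrivial cocycle, that sector simply imposes no bound on $d$; and I would take care that the phrase ``nontrivial cycle/cocycle'' is read exactly as membership in $\ker\partial_1\setminus\im\partial_2$, respectively $\ker\delta_2\setminus\im\delta_1$.
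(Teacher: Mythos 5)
Your proposal is correct and follows essentially the same route as the paper: both translate Theorem~\ref{weight CSS} through the correspondences of Table~\ref{CBS}, using that the coordinate map is weight-preserving and that the minimum over the union $W$ splits into the two minima in equation~(\ref{d of homo}). Your explicit dual-pairing verification of the entry $\ker\delta_2\leftrightarrow r(\mathcal{B})^\perp$ (which the paper only asserts as ``not hard to show'') and the $+\infty$ convention for empty difference sets are welcome clarifications but do not change the argument.
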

\begin{proof}
    Again, we denote the right part of equation (\ref{d of homo}) as $d'$. Given $\mathbf{c}\in W$ ( $W$ is defined in Theorem \ref{weight CSS}.), suppose  $\mathbf{c}\in r(\mathcal{B})^\perp\setminus r(\mathcal{A})$, by Table \ref{CBS}, there exist a nontrivial cocycle $c'$ whose coordinate is $\mathbf{c}$, therefore $\weight(\mathbf{c})=l(c')\geq d'$, by Theorem \ref{weight CSS} and the arbitrariness of $\mathbf{c}$, we have $d\geq d'$. On the contrary, for any nontrivial cycle \(c\in \ker\partial_1\setminus \im\partial_{2}\), its coordinate $\mathbf{c}$ belongs to $r(\mathcal{A})^\perp\setminus r(\mathcal{B})$, we have \(l(c)=\weight(\mathbf{c})\geq d\), the same for any nontrivial cocycle, therefore $d'\geq d$. 
\end{proof}

\subsection{Discussion}
In section 3.2, We have shown that the actions of elements in $N(\mathcal{S})\setminus \langle\omega I\rangle\mathcal{S}$ on $\mathcal{C}$ are nontrivial, these elements are called \emph{logical operators}. Notice further that $\langle\omega I\rangle\mathcal{S}$ is a normal subgroup of $N(S)$, we have the quotient group $\mathcal{L}=N(\mathcal{S})/\langle\omega I\rangle\mathcal{S}$, whose elements represent all essentially different logical actions on $\mathcal{C}$. We now ask, does there exist a set of of logical operators $\overline{X}_i,\overline{Z}_i, i=1,2,\cdots,r$, satisfying the same commutation relation as any $r$ physical qudit Pauli operators $X_i,Z_i$ do , such that the logical Pauli group they generated is $\mathcal{L}$ by at most  phase factors, and thus $\mathcal{C}$ can be decomposed into $r$ logical qudits?
When $D$ is not prime, this may  not be the case, recall the example after Theorem  \ref{size of homo} which shows that even the dimension of $\mathcal{C}$ needs not always be $D^r$.  Recent advancements in the field have addressed this problem by utilizing  tools such as normal forms \cite{unknown} or by deduction from homological quantum rotor codes \cite{vuillot2023homological}. These developments provide deeper insights into the relationship between logical operators and the structure of code spaces.

   \section{Qudit Hypermap Code }
  In a general 2-complex construction, even if the 2-complex comes from an oriented surface, there seems to have no canonical way of orienting the edges, i.e, defining the functions \(I_s,I_t\). In this section, we show that this arbitrariness can be avoid when the 2-complex comes in a certain way from a hypermap.
   
   A hypermap, more precisely, a combinatorial hypermap, consists of a number set \(B_n=\{1,2,\cdots,n\}\) with a pair of permutations $(\alpha, \sigma)\in S_n$. For each element $\gamma\in\langle\alpha,\sigma\rangle$, define its orbits to be the equivalence classes of $B_n$ under the relation \(i\sim j\Leftrightarrow \exists \gamma'\in \langle\gamma\rangle, \gamma'(i)=j \), then for each $i\in B_n$, there is a positive integer \(r\) so that the \(\gamma\)-orbit it belongs to is \(orb_\gamma(i)=\{i,\gamma(i),\cdots,\gamma^{r-1}(i)\}\),with \(\gamma^r(i)=i\).  We call the orbits of \(\alpha\) \emph{hyperedges},  the orbits of \(\sigma\) \emph{hypervertices}, and the orbits of \(\alpha^{-1}\sigma\) \emph{faces}. For \(\alpha^{-1}\sigma\) here, we take the convention in Leslie,\cite{Martin} i.e, acting from left to right. In addition, we call the elements of $B_n$ themselves \emph{darts}, and  denote \(e_{ \owns i}\), \(v_{\owns i}\), and \(f_{\owns i}\) the hyperedge, the hypervertex, and the face that dart $i$ belongs to. 
    
    Let \(\mathcal{V},\mathcal{E},\mathcal{F}\) be the free $\mathbb{Z}_D$-modules generated by all hypervertices, hyperedges, and faces, respectively, $\mathcal{W}$ be the free $\mathbb{Z}_D$-modules generated by all darts $B_n$.  We define a homomorphism $d_2: \mathcal{F}\rightarrow\mathcal{W}$ by \(d_2(f)=\sum_{i \in f}i\), and a homomorphism $d_1: \mathcal{W}\rightarrow \mathcal{V}$ by \(d_1(i)=v_{\owns \alpha^{-1}(i)}-v_{\owns i}\) which takes the difference between the $\sigma$ orbits of two adjacent elements in an $\alpha$ orbit, then we have 
    \begin{lemma}
    \(d_1\circ d_2=0\).
    \end{lemma}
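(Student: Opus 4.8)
The plan is to verify $d_1 \circ d_2 = 0$ by evaluating the composite on each free generator $f \in \mathcal{F}$, i.e.\ on each face of the hypermap. Since $\mathcal{F}$ is free on the set of faces and $d_1, d_2$ are homomorphisms, it suffices to show $d_1(d_2(f)) = 0$ for every face $f$. Unwinding the definitions, $d_2(f) = \sum_{i \in f} i$, so $d_1(d_2(f)) = \sum_{i \in f}\left(v_{\owns \alpha^{-1}(i)} - v_{\owns i}\right)$, where the sum runs over all darts $i$ lying in the face $f$, i.e.\ over one orbit of the permutation $\alpha^{-1}\sigma$ (acting left to right).

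The key observation is that the set $\{\, i : i \in f \,\}$ of darts in a face is closed under $\alpha^{-1}$. Indeed, a face is an orbit of $\alpha^{-1}\sigma$; I would want to argue that $\alpha^{-1}$ maps this orbit to itself, so that the map $i \mapsto \alpha^{-1}(i)$ restricts to a bijection of $f$ onto $f$. Granting this, the sum $\sum_{i \in f} v_{\owns \alpha^{-1}(i)}$ is just a reindexing of $\sum_{i \in f} v_{\owns i}$: replacing $i$ by $\alpha(j)$ as $i$ ranges over $f$ makes $j$ range over $f$ as well, and the term becomes $v_{\owns j}$. Hence the two sums cancel termwise after reindexing and $d_1(d_2(f)) = 0$.

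The main obstacle is establishing precisely that the face (an $\alpha^{-1}\sigma$-orbit) is invariant under $\alpha^{-1}$, and being careful about the left-to-right composition convention inherited from Leslie. One clean way: if $j$ is in the face of $i$, then $j = i\,(\alpha^{-1}\sigma)^k$ for some $k$ (left-to-right action), and one checks $\alpha^{-1}(j)$ and $\alpha^{-1}(i)$ lie in the same $\alpha^{-1}\sigma$-orbit by conjugating the cycle structure appropriately — concretely, $\alpha^{-1}$ conjugates $\alpha^{-1}\sigma$ to a permutation with the same orbit partition, or more directly one shows $\alpha^{-1}(i)$ and $i$ themselves are related by a power of $\alpha^{-1}\sigma$ because $\sigma^{-1}\alpha = (\alpha^{-1}\sigma)^{-1}$ sends $\alpha^{-1}(i)$ to $\sigma^{-1}(i)$ and a short manipulation links these darts within the orbit. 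I would write this out carefully, then note that the bijection $f \to f$ given by $\alpha^{-1}$ is exactly what makes the telescoping/reindexing work. Once the invariance is pinned down, the cancellation is a one-line reindexing argument and the lemma follows.

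Alternatively, and perhaps more transparently, I would group the sum by hypervertices: for a fixed hypervertex $v$, the coefficient of $v$ in $d_1(d_2(f))$ counts (with signs) the darts $i \in f$ with $v_{\owns \alpha^{-1}(i)} = v$ minus those with $v_{\owns i} = v$; the claim is that for each face $f$ and each hypervertex $v$ these two counts coincide, again because $i \mapsto \alpha^{-1}(i)$ permutes the darts of $f$. Either formulation reduces the lemma to the single combinatorial fact that $\alpha^{-1}$ preserves faces, which is the real content and the step deserving the most care.
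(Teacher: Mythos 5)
There is a genuine gap, and it sits exactly at the step you flag as ``the real content'': the claim that a face, i.e.\ an orbit of $\alpha^{-1}\sigma$, is invariant under $\alpha^{-1}$. This is false in general. Take $B_2=\{1,2\}$ with $\alpha=\sigma=(1\ 2)$: then $\alpha^{-1}\sigma$ (acting left to right) is the identity, so the faces are the singletons $\{1\}$ and $\{2\}$, and $\alpha^{-1}$ swaps them rather than preserving either. Geometrically this is no accident: $\alpha$ rotates the darts around a hyperedge, and a hyperedge typically borders several faces, so $\alpha^{-1}(i)$ generally leaves the face of $i$. Consequently neither of your two proposed reductions --- reindexing $\sum_{i\in f} v_{\owns\alpha^{-1}(i)}$ by the substitution $i\mapsto\alpha(j)$ with $j$ ranging over $f$, or matching the signed counts at each hypervertex via ``$\alpha^{-1}$ permutes the darts of $f$'' --- goes through, and the suggested justification that $\alpha^{-1}(i)$ and $i$ are related by a power of $\alpha^{-1}\sigma$ fails in the example above.

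The correct mechanism, which is the paper's one-line argument, is that the cancellation happens at the level of hypervertices rather than darts: since $\alpha^{-1}\sigma(i)=\sigma(\alpha^{-1}(i))$ in the left-to-right convention, the darts $\alpha^{-1}(i)$ and $\alpha^{-1}\sigma(i)$ lie in the same $\sigma$-orbit, hence $v_{\owns\alpha^{-1}(i)}=v_{\owns\alpha^{-1}\sigma(i)}$. The permutation that genuinely preserves $f$ is $\alpha^{-1}\sigma$ itself (by definition of a face), so writing $f=\{i_0,\dots,i_{k-1}\}$ with $i_{s+1}=\alpha^{-1}\sigma(i_s)$ gives $d_1d_2(f)=\sum_{s}\bigl(v_{\owns i_{s+1}}-v_{\owns i_s}\bigr)=0$ by telescoping around the cycle. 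Your overall strategy --- evaluate on generators, then reindex a sum over the face --- has the right shape; the reindexing just has to be carried out by $\alpha^{-1}\sigma$ after passing to hypervertices, not by $\alpha^{-1}$ on darts.
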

    \begin{proof}
    For an $f\in \mathcal{F}$, we write its element as $f=\{i_0,i_2,\cdots,i_{k-1}\}$ where the subscript $s\in\mathbb{Z}_k$ with $i_{s+1}=\alpha^{-1}\sigma(i_s)$, which implies \(v_{\owns \alpha^{-1}(i_s)}=v_{\owns i_{s+1}}\), thus 
    \(d_1\circ d_2(f)=d_1\sum_{s\in\mathbb{Z}_k}i_s=v_{\owns \alpha^{-1}(i_0)}-v_{\owns i_0}+v_{\owns \alpha^{-1}(i_1)}-v_{\owns i_1}+\cdots+v_{\owns \alpha^{-1}(i_{k-1})}-v_{\owns i_{k-1}}=0\).
    \end{proof}
    \noindent Also, there is a homomorphism $\iota:\mathcal{E}\rightarrow\mathcal{W}$ with \(\iota(e)=\sum_{i\in e}i\), which is very similar to $d_2$, and we have
    \begin{lemma}
    \( d_1\circ\iota=0\).
    \end{lemma}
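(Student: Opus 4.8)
The plan is to mimic the proof of the previous lemma (the one showing $d_1\circ d_2=0$) almost verbatim, since $\iota$ and $d_2$ have the same shape: both send a generator to the sum of the darts in the corresponding orbit. The only difference is that $d_2$ sums over a face (an $\alpha^{-1}\sigma$-orbit) while $\iota$ sums over a hyperedge (an $\alpha$-orbit), and it is precisely the $\alpha$-orbit structure that makes the telescoping work directly.

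First I would fix a hyperedge $e\in\mathcal{E}$ and write its dart set as an $\alpha$-orbit, say $e=\{i_0,i_1,\dots,i_{r-1}\}$ with indices $s\in\mathbb{Z}_r$ and $i_{s+1}=\alpha(i_s)$, so that $\alpha^{-1}(i_{s+1})=i_s$ and $\alpha^r(i_0)=i_0$. Then
\begin{equation}
d_1\circ\iota(e)=d_1\Bigl(\sum_{s\in\mathbb{Z}_r}i_s\Bigr)=\sum_{s\in\mathbb{Z}_r}\bigl(v_{\owns\alpha^{-1}(i_s)}-v_{\owns i_s}\bigr).
\end{equation}
Using $\alpha^{-1}(i_s)=i_{s-1}$, the first terms run over $v_{\owns i_{s-1}}$ for $s\in\mathbb{Z}_r$, which is exactly the same multiset as $v_{\owns i_s}$ for $s\in\mathbb{Z}_r$ (reindex $s\mapsto s+1$ cyclically). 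Hence the two sums cancel termwise after the cyclic reindexing and we get $d_1\circ\iota(e)=0$. Since $e$ was an arbitrary generator of the free module $\mathcal{E}$ and $d_1\circ\iota$ is a homomorphism, this forces $d_1\circ\iota=0$ on all of $\mathcal{E}$.

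There is really no main obstacle here; the statement is a one-line telescoping/cyclic-cancellation argument. The only thing to be slightly careful about is the indexing convention: one must check that $d_1(i)=v_{\owns\alpha^{-1}(i)}-v_{\owns i}$ when summed over an $\alpha$-orbit telescopes to zero because traversing the orbit by $\alpha$ and then reading off $\alpha^{-1}$ of each dart simply shifts the cyclic list back by one step, so no boundary terms survive (unlike a path in a graph, a cycle has no endpoints). If one prefers, I could instead observe abstractly that $d_1$ vanishes on any element of the form $\sum_{i\in O}i$ for $O$ a single $\alpha$-orbit, and note that $\iota(e)$ is exactly such an element while $d_2(f)$ is a sum over an $\alpha^{-1}\sigma$-orbit — making this lemma and the previous one two instances of the same elementary fact. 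Either phrasing is short enough to present in full rather than merely sketch.
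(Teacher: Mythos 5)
Your proof is correct, and the paper in fact omits any proof of this lemma, treating it as immediate from the analogy with the preceding lemma ($d_1\circ d_2=0$); your cyclic telescoping over the $\alpha$-orbit is exactly the intended argument. The one point worth keeping in mind is that the cancellation here is even more direct than in the previous lemma, since $\alpha^{-1}(i_s)=i_{s-1}$ holds literally as darts, whereas for a face one only has equality of the containing hypervertices $v_{\owns\alpha^{-1}(i_s)}=v_{\owns i_{s+1}}$ -- so your closing remark that the two lemmas are ``the same elementary fact'' is a slight overstatement, though harmless.
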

    \noindent 
     Lemma 5 guarantees a well defined  homomorphism $\Delta_1$ from the quotient module $\mathcal{W}/\iota(\mathcal{E})$ to \(\mathcal{V}\), with $\Delta_1[\omega]=d_1\omega$, where $[\omega]$ denotes the equivalence class of $\omega$. Further more, if we define \(\Delta_2:\mathcal{F}\rightarrow \mathcal{W}/\iota(\mathcal{E})\) by $\Delta_2=\rho \circ d_2$, where $\rho$ is the natural projection
     from  \(\mathcal{W}\) to \(\mathcal{W}/\iota(\mathcal{E})\) , we would have $\Delta_1\circ \Delta_2=0$.
     %\begin{wrapfigure}{r}{0.47\textwidth}
     \begin{figure}[h]
    
\centering
\begin{tikzpicture}[
squarednode/.style={rectangle, draw=black!0, fill=green!0, very thick, minimum size=2mm},
]
\node[squarednode]      (site 1)        {\(\mathcal{W}\)};
\node[squarednode]      (site 2)       [right=1.6 of site 1] {\(\mathcal{V}\)};
\node[squarednode]      (site 3)       [left=1.6 of site 1] {\(\mathcal{F}\)};
\node[squarednode]      (site 4)       [below=1 of site 1] {\(\mathcal{W}/\iota(\mathcal{E})\)};
\node[squarednode]      (fake l)       [left=0.3 of site 4]{};
\node[squarednode]      (fake r)       [right=0.3 of site 4]{};
\node[squarednode]      (p1)       [above=0.1 of fake l]{\(\Delta_2\)};
\draw[->] (site 3) -- (site 4);
\node[squarednode]      (p2)       [above=0.1 of fake r]{\(\Delta_1\)};
\draw[->] (site 4) -- (site 2);
\node[squarednode]      (fake l+)       [left=0.4 of site 1]{};
\node[squarednode]      (fake r+)       [right=0.4 of site 1]{};
\node[squarednode]      (d1)       [above=-0.2 of fake l+]{\(d_2\)};
\node[squarednode]      (d2)       [above=-0.2 of fake r+]{\(d_1\)};
\draw[->] (site 1) -- (site 2);
\draw[->] (site 3) -- (site 1);
\node[squarednode]      (center)       [below=0.3 of site 1]{};
\node[squarednode]      (p)       [right=-0.1 of center]{\(\rho\)};
\draw[->] (site 1) -- (site 4);
\end{tikzpicture}
\caption{\(\Delta_i\) are defined to make the diagram commute.}
\label{jiaohuantu}
\end{figure}

In the previous section, we discussed how to construct a homological quantum code in the qudit setting from a chain complex obtained from an abstract 2-complex. In fact, for any $\mathbb{Z}_D$-chain complex $C_2\stackrel{\partial_2}{\longrightarrow}C_1\stackrel{\partial_1}{\longrightarrow}C_0$, where $C_i$ are all free modules with finite bases ${f_i},{e_i},{v_i}$, if $\partial_2(f_j)=\sum_{i}\sigma_i e_i$ and $\delta_1(v^*_j)=\sum_i\sigma_i e^*_i$  is the unique expansion with the bases, where $\sigma_i\in \mathbb{Z}_D$, we can construct generators $B_{f_j}=\bigotimes_iZ^{\sigma_i}$ and $A_{v_j}=\bigotimes_iX^{\sigma_i}$. All relevant lemmas, theorems 
 along with their proofs in section 3 still apply, in particular, we have a homological quantum code $(\mathcal{C},\mathcal{S})$ that satisfies $\dim \mathcal{C}=|H_1|$. Now, if $\mathcal{W}/\iota(\mathcal{E})$ is a finitely generated free module, then we can directly construct a homological quantum code from the chain complex $\mathcal{F}\stackrel{\Delta_2}{\longrightarrow}\mathcal{W}/\iota(\mathcal{E})\stackrel{\Delta_1}{\longrightarrow}\mathcal{V}$. This can be seen as a generalization of the hypermap quantum code in the qudit case. To do this, we choose one dart from each hyperedge and call it a \emph{special dart}, and these special darts form a subset $S\subset B_n$. We have:

\begin{lemma}
\(\mathcal{W}/\iota(\mathcal{E})\) is a free module with a basis \(\{[i]\mid i\in B_n\setminus S\}\).
\end{lemma}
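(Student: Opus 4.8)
The plan is to produce an explicit change of basis of $\mathcal{W}$ that turns the dart basis $B_n$ into the set $\{\iota(e)\mid e\text{ a hyperedge}\}\cup\{i\mid i\in B_n\setminus S\}$, and then read off the quotient by $\iota(\mathcal{E})$ directly.

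First I would define the $\mathbb{Z}_D$-linear endomorphism $\phi$ of $\mathcal{W}$ on the dart basis by $\phi(s_e)=\iota(e)=\sum_{i\in e}i$ for the special dart $s_e$ of each hyperedge $e$, and $\phi(i)=i$ for every non-special dart $i\in B_n\setminus S$. Ordering the darts so that the darts of a common hyperedge are consecutive with the special one listed first, the matrix of $\phi$ in the dart basis is block diagonal with one block per hyperedge, and the block of a hyperedge $e$ with $|e|=m$ is the $m\times m$ lower-triangular matrix whose first column is $(1,1,\dots,1)^{\mathrm{T}}$ and whose remaining columns are the standard basis vectors $\mathbf{e}_2,\dots,\mathbf{e}_m$ (when $|e|=1$ the block is just $(1)$). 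Each such block has determinant $1$, hence is invertible over $\mathbb{Z}_D$, so $\phi$ is an automorphism of $\mathcal{W}$, and therefore it carries the dart basis to a new basis $\mathcal{B}'=\{\iota(e)\}_{e}\cup\{i\}_{i\in B_n\setminus S}$ of $\mathcal{W}$.

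Since the elements $\iota(e)$, with $e$ ranging over hyperedges, form part of a basis of $\mathcal{W}$, they are $\mathbb{Z}_D$-linearly independent; hence $\iota\colon\mathcal{E}\to\mathcal{W}$ is injective and $\iota(\mathcal{E})$ is exactly the submodule spanned by the subset $\{\iota(e)\}_{e}\subseteq\mathcal{B}'$. Quotienting a finitely generated free module by the submodule generated by a subset of one of its bases yields a free module whose basis is the image of the complementary subset, so applying this with $\mathcal{B}'$ gives that $\mathcal{W}/\iota(\mathcal{E})$ is free with basis $\{[i]\mid i\in B_n\setminus S\}$, as claimed. The only point needing care is that $\phi$ must be invertible over $\mathbb{Z}_D$ and not merely over a field — this is precisely why I pass through the explicit block-triangular matrix and its determinant, avoiding any dimension-counting argument that would fail when $D$ is not prime; alternatively, one may observe that $\mathcal{B}'$ generates $\mathcal{W}$ and has $n=|B_n|$ elements, and invoke the fact that a surjective endomorphism of a finitely generated module over a commutative ring is an isomorphism.
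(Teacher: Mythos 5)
Your proof is correct, and it takes a genuinely different route from the paper's. The paper verifies the two defining properties of a basis directly in the quotient: linear independence is checked by lifting a relation $\sum_{i\in B_n\setminus S}k_i[i]=0$ to $\mathcal{W}$, writing the lift as $\sum_e R_e\iota(e)$, and comparing coefficients against the dart basis to force $R_e=0$ and then $k_i=0$; spanning is checked by substituting $[s_e]=-\sum_{i\in e\setminus\{s_e\}}[i]$ into an arbitrary class. You instead build the automorphism $\phi$ of $\mathcal{W}$ sending each special dart $s_e$ to $\iota(e)$ and fixing the rest, verify invertibility via the unit-lower-triangular block structure (determinant $1$, hence invertible over $\mathbb{Z}_D$ for any $D$), and then invoke the standard fact that quotienting a free module by the span of part of a basis leaves a free module on the complementary part. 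Both arguments are valid over $\mathbb{Z}_D$ with $D$ arbitrary, which is the essential point; your determinant computation and the paper's coefficient comparison play the same role of avoiding any appeal to vector-space dimension theory. Your version buys a little more as a byproduct --- it exhibits the splitting $\mathcal{W}\simeq\iota(\mathcal{E})\oplus\bigl(\mathcal{W}/\iota(\mathcal{E})\bigr)$ and the injectivity of $\iota$ explicitly --- at the cost of introducing the auxiliary map $\phi$ and the ordering convention on darts; the paper's version is more elementary and self-contained. Your closing alternative (a surjective endomorphism of a finitely generated module over a commutative ring is an isomorphism) is also a legitimate way to certify that $\phi$ is invertible without the determinant.
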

\begin{proof}
First, we show that this is a linear independent set. Suppose there are $k_i\in \mathbb{Z}_D$ such that $\sum_{i\in B_n\setminus S}k_i[i]=0$, then we have $\sum_{i\in B_n\setminus S}k_ii=\sum_eR_e\iota(e)$, $R_e\in \mathbb{Z}_D$. If we use $s_e$ to denote the special dart in the hyperedge $e$, then  \(\sum_eR_e\iota(e)=\sum_eR_es_e+\sum_{i\in B_n\setminus S}h_ii\) for some $h_i\in \mathbb{Z}_D$, thus $R_e=h_i-k_i=0$ by linear independence of the set $B_n$ in \(\mathcal{W}\), which further indicates $k_i=0$. 

On the other hand, for every $\omega\in\mathcal{W}$, we have some \(R_e,h_i\in\mathbb{Z}_D\) such that 
\begin{align*}
    [\omega]&=[\sum_eR_es_e+\sum_{i\in B_n\setminus S}h_ii]\\&=\sum_eR_e[s_e]+\sum_{i\in B_n\setminus S}h_i[i]\\&=\sum_eR_e(-{\sum_{i\in{e\setminus\{s_e\}}}[i]})+\sum_{i\in B_n\setminus S}h_i[i],
\end{align*}
 which shows \(\mathcal{W}/\iota(\mathcal{E})=\text{span}\{[i]\mid i\in B_n\setminus S\}\).
\end{proof}
\noindent Surprisingly, there is an another approach, where we will instead construct an abstract 2-Complex \(\Sigma=(V,E,I_s,I_t,F,B)\) whose \emph{chain} $C_2(\Sigma)\stackrel{\partial_2}{\longrightarrow}C_1(\Sigma)\stackrel{\partial_1}{\longrightarrow}C_0(\Sigma)$ is isomorphic to the chain of hypermap homology $\mathcal{F}\stackrel{\Delta_2}{\longrightarrow}\mathcal{W}/\iota(\mathcal{E})\stackrel{\Delta_1}{\longrightarrow}\mathcal{V}$, which shows that hypermap homology quantum codes are not something special, but simply the homological quantum codes from 2-Complexes.
 In order to define $\Sigma$, we let $V$ be the set of all hypervertices, $E$ be the set $B_n\setminus S$, and $F$ be the set of all faces, then It is apparent that we have \(C_2(\Sigma)= \mathcal{F} \), \(C_1(\Sigma)\simeq \mathcal{W}/\iota(\mathcal{E})\), and \(C_0(\Sigma)= \mathcal{V}\). Further more, for every $e\in E$, which is a non-special dart, i.e, $e=i\in B_n\setminus S $, define \(I_t(e)=v_{\owns \alpha^{-1}(i)}\), \(I_s(e)=v_{\owns i}\), then we have $\partial_1(e)=I_t(e)-I_s(e)=v_{\owns \alpha^{-1}(i)}-v_{\owns i}=d_1i=\Delta_1[i]$, which means $\partial_1\simeq\Delta_1$. The precise meaning of the symbol \(\simeq\) is that the matrix of \(\partial_1\) with respect to the basis \(E,V\) is the same as the matrix of \(\Delta_1\) with respect to the basis \([E],V\). To define $B$, notice that for every $f\in F$, there is a positive integer $r$ such that \(f=\{i_0,i_1,\cdots,i_{r-1}\}\) with  subscripts in $\mathbb{Z}_r$, and satisfy \(i_{k+1}=\alpha^{-1}\sigma (i_k)\) for all $k$. Suppose that the subset which consists of all special darts in $f$ is $S_f=\{i_{k_1},i_{k_2},\cdots i_{k_s}\}$, we have $[i_{k_t}]=[i_{k_t}-\iota(e_{\owns{i_{k_t}}})]=-\sum_{l=1}^{|e_{\owns{i_{k_t}}}|-1}[i^t_l]$, with $i^t_{l+1}=\alpha (i^t_{l})$ for all $l\in \{1,2,\cdots,|e_{\owns{i_{k_t}}}|-2 \}$, and also $\alpha(i_{k_t})=i^t_1$, $\alpha(i^t_{|e_{\owns{i_{k_t}}}|-1})=i_{k_t}$, where $i^t_l\in B_n\setminus S$. Thus we have 
 \begin{equation}
     \Delta_2(f)=\sum_{i\in f\setminus S}[i]-\sum_{t=1}^s{\sum_{l=1}^{|e_{\owns{i_{k_t}}}|-1}[i^t_l]}
 \end{equation} 
 along with the following lemma
 \begin{lemma}
 In the r-tuple \((i_0,i_1,\cdots,i_{r-1})\) from $f$, if we replace each $i_{k_t}\in S_f $ by the tuple $\mathbf{p}_t=((i_1^t)^{-1},(i_2^t)^{-1},\cdots,(i_{|e_{\owns{i_{k_t}}}|-1}^t)^{-1})$ in \(E\cup{E^{-1}}\), then we get a closed walk
 \begin{equation*}
     [i_0,i_1,\cdots,\mathbf{p}_1,i_{k_1+1},\cdots,\mathbf{p}_s,i_{k_s+1},\cdots,i_{r-1}].
 \end{equation*}
 \end{lemma}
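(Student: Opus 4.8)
The plan is to check directly that the tuple $\tilde w$ obtained from $(i_0,i_1,\cdots,i_{r-1})$ by performing all the substitutions $i_{k_t}\mapsto\mathbf p_t$ meets the definition of a closed walk: its entries must lie in $\bar E=E\cup E^{-1}$, and consecutive entries (read cyclically) must satisfy the incidence condition $I_t(\tilde e_j)=I_s(\tilde e_{j+1})$. The first requirement is immediate, since the face darts $i_j\in f\setminus S$ lie in $E=B_n\setminus S$ and each $i^t_l$ is by construction a non-special dart of the hyperedge $e_{\owns i_{k_t}}$, so $(i^t_l)^{-1}\in E^{-1}$. As a sanity check one notes that $c_{\tilde w}=\sum_{i\in f\setminus S}i-\sum_t\sum_l i^t_l$, which under the identification $C_1(\Sigma)\simeq\mathcal W/\iota(\mathcal E)$ is exactly $\Delta_2(f)$, so $\tilde w$ is the only reasonable candidate for $B(f)$ and the lemma is precisely the assertion that this candidate is an honest closed walk. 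Since the incidence condition is local, everything reduces to inspecting each type of consecutive pair.

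First I would record the facts used: for $i\in E$ one has $I_s(i)=v_{\owns i}$ and $I_t(i)=v_{\owns\alpha^{-1}(i)}$, hence $I_s(i^{-1})=v_{\owns\alpha^{-1}(i)}$ and $I_t(i^{-1})=v_{\owns i}$; along the face orbit $i_{j+1}=\alpha^{-1}\sigma(i_j)$, and since applying $\sigma$ does not change the hypervertex ($v_{\owns\sigma(x)}=v_{\owns x}$) while the convention is that $\alpha^{-1}\sigma$ applies $\alpha^{-1}$ first, this yields $v_{\owns i_{j+1}}=v_{\owns\alpha^{-1}(i_j)}$; finally the hyperedge relations $i^t_1=\alpha(i_{k_t})$, $i^t_{l+1}=\alpha(i^t_l)$, $\alpha(i^t_{m_t-1})=i_{k_t}$ with $m_t:=|e_{\owns i_{k_t}}|$ (so $i^t_{m_t-1}=\alpha^{-1}(i_{k_t})$). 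Then I would run through the four kinds of consecutive pair that occur in $\tilde w$: (i) two plain face darts $i_a,i_b$ with $i_b=\alpha^{-1}\sigma(i_a)$; (ii) two entries $(i^t_l)^{-1},(i^t_{l+1})^{-1}$ inside one block; (iii) the entry preceding a deleted $i_{k_t}$ followed by that block's first entry $(i^t_1)^{-1}$; (iv) a block's last entry $(i^t_{m_t-1})^{-1}$ followed by the entry after $i_{k_t}$. In (i) both sides equal $v_{\owns i_b}$; in (ii) both equal $v_{\owns i^t_l}$; in (iii) both reduce to $v_{\owns i_{k_t}}$; in (iv) both reduce to $v_{\owns i_{k_t+1}}=v_{\owns\alpha^{-1}(i_{k_t})}$. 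In (iii) and (iv) the neighbouring entry may itself be the last or first entry of an adjacent block, which is the case of two consecutive special darts in $f$; this is handled by the same identity $v_{\owns\sigma(x)}=v_{\owns x}$.

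The one remaining point is the degenerate case $m_t=1$, in which the hyperedge of $i_{k_t}$ is a singleton, the block $\mathbf p_t$ is empty, and $i_{k_t}$ is simply erased; the incidence condition at the merged junction is then checked exactly as in (iii)/(iv) using $\alpha^{-1}(i_{k_t})=i_{k_t}$, and if in addition $f\setminus S=\varnothing$ one obtains the trivial empty walk, consistent with $\Delta_2(f)=0$. I do not expect a genuine obstacle here: the whole argument is a local incidence computation, and the only things needing real care are keeping the composition convention for $\alpha^{-1}\sigma$ straight and correctly accounting for the junctions around a special dart — cases (iii)–(iv), including adjacent special darts — while everything else is routine.
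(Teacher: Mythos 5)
Your proof is correct and takes essentially the same approach as the paper's: a direct, local verification that each type of consecutive pair in the substituted tuple satisfies the incidence condition $I_t(e_i)=I_s(e_{i+1})$, using $v_{\owns\sigma(x)}=v_{\owns x}$ and the hyperedge relations $i^t_1=\alpha(i_{k_t})$, $\alpha(i^t_{|e_{\owns i_{k_t}}|-1})=i_{k_t}$. You are in fact more thorough than the paper, which writes out only the junction case of a plain dart followed by a block's first entry and dismisses the rest (including adjacent special darts and singleton hyperedges) as ``handled similarly.''
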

 \begin{proof}
Suppose that  the `closed walk' above is re-indexed by $e_i\in E\cup{E^{-1}}$ with \(i \in \mathbb{Z}_K\) where $K$ is the length.   We need to verify that $I_s(e_{i+1})=I_t(e_i)$ for all $i$. For instance, let's consider the case where $e_{i+1} = (i^1_1)^{-1}$ and  \(i_{k_1-1}\notin S_f\), then we have $I_s((i^1_1)^{-1})=v_{\owns{\alpha^{-1}(i^1_1)}}=v_{\owns i_{k_1}}=v_{\owns \alpha^{-1}\sigma(i_{{k_1}-1})}=v_{\owns \alpha^{-1}(i_{{k_1}-1})}=I_t(i_{{k_1}-1})$. The other cases can be handled similarly.
 \end{proof}
 
\noindent Now, if we define \(B(f)\) to be the closed walk in lemma 7, then by equation (22) there is \(\partial_2\simeq\Delta_2\). 

We have shown that every hypermap map code is the homological quantum code constructed from a 2-Complex $\Sigma$. The most intresting observation about \(\Sigma\) is that it should be a surface 2-Complex. Actually, every hypermap \((\alpha,\sigma)\) has a geometrical representation $H=(M,\Gamma)$ called \emph{topological hypermap}, \cite{Martin,Zihan} where $M$ is an oriented surface, $\Gamma$ is an bipartite graph embedded in $M$ whose edges correspond to the darts in $B_n$,  the normal vector field given by $M$'s orientation determines the maps $\alpha$ and $\sigma$. Sarvepalli \cite{Pradeep} showed that we can obtain by adding curves on $M$ an ordinary surface code which equals the original hypermap code constructed from $H$ by Leslie. In our terms, Sarvepalli's curves corresponds to the set $E$ of edges in the 2-Complex $\Sigma$ of \((\alpha,\sigma)\) , furthermore,  they endow $M$ with a cell structure which equals combinatorically to \(\Sigma\). A small difference is that in Sarvepalli, \cite{Pradeep} the curves are not oriented for they only deal with qubit quantum codes. Notice that if a 2-Complex comes from a cell structure of an orientable surface with a global normal vector field, moreover, the  function $B:F\rightarrow W_\Gamma$ is determined by the global  field because the restriction of the  field in each 2-cell will induce an orientation of its boundary closed walk, then we must have  
\begin{equation}\label{O com}
    \sum_{f\in F} \partial_2(f)=0.
\end{equation}
\cite{Bombin} Let's define  orientable 2-Complexes to be those satisfying equation (\ref{O com}), then we have 
\begin{theorem}
The 2-Complex $\Sigma$ we constructed from \((\alpha,\sigma)\) is orientable.
\end{theorem}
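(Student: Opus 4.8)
The plan is to verify equation (\ref{O com}) directly for the 2-Complex $\Sigma=(V,E,I_s,I_t,F,B)$ built from the hypermap $(\alpha,\sigma)$, using the description of $\partial_2$ as $\partial_2\simeq\Delta_2$ together with the explicit formula (22) for $\Delta_2(f)$. Recall that $E=B_n\setminus S$, so a generic basis element of $C_1(\Sigma)$ is a non-special dart $i$. It therefore suffices to show that for each fixed non-special dart $i\in B_n\setminus S$, the coefficient of $[i]$ in $\sum_{f\in F}\Delta_2(f)$ is zero in $\mathbb{Z}_D$. From (22), $\Delta_2(f)=\sum_{j\in f\setminus S}[j]-\sum_{t=1}^s\sum_{l=1}^{|e_{\owns i_{k_t}}|-1}[i^t_l]$, so the dart $i$ contributes to $\Delta_2(f)$ in (at most) two ways: a $+[i]$ whenever $i$ itself lies in the face $f$ (which happens for exactly one $f$, namely $f=f_{\owns i}$, since the faces partition $B_n$), and a $-[i]$ for each special dart $i_{k_t}$ whose hyperedge $e_{\owns i_{k_t}}$ contains $i$ in its $\alpha$-orbit between $i_{k_t}$ and $i_{k_t}$ again — that is, whenever $i$ lies in the same hyperedge as some special dart appearing in some face $f$.

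The key combinatorial observation is this: the hyperedge $e_{\owns i}$ containing $i$ has exactly one special dart, call it $s_{e}$, and $s_e$ belongs to exactly one face (again because faces partition $B_n$). So there is exactly one pair $(f,t)$ in the whole double sum $\sum_{f\in F}\sum_{t=1}^s(\cdots)$ for which the inner hyperedge is $e_{\owns i}$, and in the corresponding inner sum $\sum_{l=1}^{|e_{\owns i}|-1}[i^t_l]$ the term $[i]$ appears exactly once, since $\{i^t_1,\dots,i^t_{|e|-1\phantom{|}}\}$ is precisely $e_{\owns i}\setminus\{s_e\}=e_{\owns i}\cap(B_n\setminus S)$ listed without repetition. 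Thus the total coefficient of $[i]$ in $\sum_{f\in F}\Delta_2(f)$ is $(+1)+(-1)=0$. Summing over all $i\in B_n\setminus S$ gives $\sum_{f\in F}\partial_2(f)=0$, which is (\ref{O com}), so $\Sigma$ is orientable by definition.

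I would organize the write-up as: (1) reduce to checking the coefficient of each basis element $[i]$; (2) identify the unique face contributing the $+[i]$ term via the face-partition of $B_n$; (3) identify the unique face-and-special-dart pair contributing a $-[i]$ term via the hyperedge-partition of $B_n$ and the fact each hyperedge has one special dart; (4) conclude cancellation. The main obstacle is bookkeeping rather than conceptual: one must be careful that in formula (22) the darts $i^t_l$ genuinely enumerate $e_{\owns i_{k_t}}\setminus\{i_{k_t}\}$ without repetition (this is exactly Lemma 6's content, since $e_{\owns i_{k_t}}\setminus\{s_e\}$ consists of non-special darts and the recursion $i^t_{l+1}=\alpha(i^t_l)$ cycles through the $\alpha$-orbit once), and that no non-special dart $i$ is ever its own hyperedge's special dart. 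Once those two partition facts (faces partition $B_n$; hyperedges partition $B_n$, each with a unique special dart) are stated cleanly, the cancellation is immediate and $D$ plays no role beyond the ring being $\mathbb{Z}_D$.
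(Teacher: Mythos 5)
Your proof is correct, but it takes a more computational route than the paper's. You expand $\sum_{f\in F}\Delta_2(f)$ in the basis $\{[i]\mid i\in B_n\setminus S\}$ of $\mathcal{W}/\iota(\mathcal{E})$ via the explicit formula (22) and verify coefficient-by-coefficient that each $[i]$ picks up a $+1$ (from the unique face containing $i$) and a $-1$ (from the unique special dart of $e_{\owns i}$, which lies in a unique face), which requires the special-dart bookkeeping of Lemma 6. The paper instead works upstairs in $\mathcal{W}$ before projecting: since faces partition $B_n$, $\sum_{f\in F}d_2(f)=\sum_{i\in B_n}i$, and since hyperedges also partition $B_n$, this equals $\sum_e\iota(e)\in\iota(\mathcal{E})$, so applying $\rho$ kills it and $\sum_f\Delta_2(f)=\rho\bigl(\sum_f d_2(f)\bigr)=0$. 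Both arguments rest on the same two partition facts, but the paper's version never touches equation (22), the choice of special darts, or the basis of the quotient, and is correspondingly shorter; yours has the minor virtue of exhibiting the cancellation explicitly in the coordinates actually used to define the face operators. One small presentational point: you should state explicitly that the $+[i]$ and $-[i]$ contributions may come from the same face or from different faces --- your global count handles both cases, but saying so removes any appearance of a case left unchecked.
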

\begin{proof}
We only have to show that \(\sum_{f\in F}\Delta_2(f)=0\), which is correct because \(\sum_{f\in F}d_2(f)=\sum_{i\in B_n}i=\sum_{e} \iota(e)\) where the last sum is done for all hyperedges. 
\end{proof}

\noindent However, apart from the fact that $\Sigma$ is orientable, we choose not to pursue the proof that $\Sigma$ is a Surface 2-Complex due to the potential intricacy of the argument.

%\section{Data availability}
%All data generated or analysed during this study are included in this  article.

%\section{Conflict of interest statement }
%The authors have no competing interests to declare that are relevant to the content of this article.

 \bibliographystyle{unsrt}
\bibliography{name}

\appendix
\appendixpage
\addappheadtotoc

\section{Proof of Theorem 1}
We borrow the proof from \cite{standard}, with  emphasis on the need of excluding nontrivial scalar multiplications.
\begin{proof}
Define
\begin{equation}\label{def P}
P\mathrel{\mathop:}=\frac{1}{|\mathcal{S}|}\sum_{s\in \mathcal{S}}s.
\end{equation}
We first prove that $P$ is a projection operator on $\mathcal{C}$.

By definition, we have
\begin{equation}\label{Proj of qudit}
P=P^{\dag}=P^2,
\end{equation}
where the two equalities come from the group properties of $\mathcal{S}$. Equation (\ref{Proj of qudit}) tells us that $P$ is a projection operator \cite{LinA}.

Let $|\psi\rangle$ be an arbitrary vector in $\mathcal{C}$. Then, it follows from the definition in equation (\ref{def P}) that $P|\psi\rangle=|\psi\rangle$. Therefore, if we let $\mathcal{W}$ be the range of $P$, i.e., the space it projects onto, we have $\mathcal{C}\subset \mathcal{W}$. Now, let us choose an arbitrary $|\phi\rangle\in \mathcal{W}$. Then, we have
\begin{equation}\label{W no ele}
|\phi\rangle=P|\phi\rangle=\frac{1}{|\mathcal{S}|}\sum_{s\in \mathcal{S}}s|\phi\rangle.
\end{equation}
Multiplying both sides of equation (\ref{W no ele}) by an arbitrary $t\in \mathcal{S}$ and using the group properties of $\mathcal{S}$, we obtain
\begin{equation}
\begin{aligned}
t|\phi\rangle &=tP|\phi\rangle=\frac{1}{|\mathcal{S}|}\sum_{s\in \mathcal{S}}ts|\phi\rangle
=\frac{1}{|\mathcal{S}|}\sum_{s\in \mathcal{S}}s|\phi\rangle\\
&=P|\phi\rangle=|\phi\rangle.
\end{aligned}
\end{equation}
Since $t$ is arbitrary, we have
\begin{equation}
t|\phi\rangle=|\phi\rangle,\forall t\in \mathcal{S}.
\end{equation}
This means that $|\phi\rangle$ belongs to the stabilizer subspace $\mathcal{C}$, and hence $\mathcal{W}\subset \mathcal{C}$. Therefore, the range of $P$ is $\mathcal{C}$, and its trace is the dimension of its range, i.e.,
\begin{equation}
\text{Tr}(P)=K=\frac{1}{|\mathcal{S}|}D^n,
\end{equation}
where the second equality follows from the fact that the trace of any non-scalar element in $\mathcal{P}_n$ is zero, and the only scalar in $\mathcal{S}$ is $I$ whose trace is $D^n$.   This completes the proof.
\end{proof}

We remark here that Theorem \ref{size of sta} can be modified slightly such that $\mathcal{S}$ is not necessarily a subgroup of $\mathcal{P}_n$ without nontrivial scalar multiplication , but simply a finite subgroup of $U(\mathcal{H}_n)$ --- the set of unitary operators on $\mathcal{H}_n$, and the right hand side of equation (\ref{numberofk})  is replaced by \(\sum_{s\in \mathcal{S}} \text{Tr}(s)\). The corresponding part in the above proof is still valid.

\section{The proof of $|r(\mathcal{A})||r(\mathcal{A})^{\perp}|=D^{|E|}$ in Theorem 3. }

We restate the property as the following lemma whose proof is adapted from \cite{Free}, with some added and modified details.
\begin{lemma}
    Let \(D\geq 2\) be an integer and $n$ be a positive integer. For any submodule $E$ of $\mathbb{Z}^n_D$, let $E^{\perp}=\{x\in \mathbb{Z}^n_D\big|x\cdot y=0,\forall y\in E \}$, we have \(|E||E^{\perp}|=D^n\). 
\end{lemma}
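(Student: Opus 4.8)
The plan is to prove $|E||E^{\perp}| = D^n$ by induction on $n$, reducing the rank-$n$ case to a rank-$(n-1)$ case together with a one-dimensional counting argument. First I would dispose of the base case $n = 1$: a submodule of $\mathbb{Z}_D$ is of the form $m\mathbb{Z}_D$ for a divisor $m \mid D$, it has $D/m$ elements, and its orthogonal complement $\{x : mx \equiv 0 \bmod D\}$ is the annihilator, which has exactly $m$ elements; so the product is $D$, as desired.

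For the inductive step I would split off the last coordinate. Let $\pi : \mathbb{Z}_D^n \to \mathbb{Z}_D$ be projection onto the final coordinate and let $\iota : \mathbb{Z}_D^{n-1} \to \mathbb{Z}_D^n$ be the inclusion $x \mapsto (x, 0)$. Set $E' = E \cap \iota(\mathbb{Z}_D^{n-1})$ (the elements of $E$ whose last coordinate vanishes, viewed in $\mathbb{Z}_D^{n-1}$) and let $E'' = \pi(E) = m\mathbb{Z}_D \subseteq \mathbb{Z}_D$. Then $|E| = |E'| \cdot |E''|$, since $E$ is an extension of $E''$ by $\iota(E')$ (every fiber of $E \to E''$ is a coset of $\iota(E')$, and fibers are nonempty exactly over $E''$). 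The task then is to relate $E^{\perp}$ to the analogous objects: one shows that $E^{\perp}$, again projected/sliced along the last coordinate, has $|(E^{\perp})'| \cdot |(E^{\perp})''| = |E^{\perp}|$, and then that $(E^{\perp})'$ is the orthogonal complement inside $\mathbb{Z}_D^{n-1}$ of $\pi(E)$-related data while $\pi(E^{\perp})$ is the annihilator of $E'$ appropriately interpreted. Combining these with the inductive hypothesis applied to a rank-$(n-1)$ submodule yields the product formula.

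The delicate point — and I expect this to be the main obstacle — is that over $\mathbb{Z}_D$ with $D$ composite, submodules are not free and dimension/rank arguments are unavailable, so the bookkeeping of which complement equals which annihilator must be done with genuine care: in particular $\pi(E^{\perp})$ is \emph{not} simply $(E')^{\perp}$ in $\mathbb{Z}_D^{n-1}$, but rather is cut down by the constraint coming from pairing against elements of $E$ with nonzero last coordinate, and dually $(E^{\perp})' = (\pi(E) \text{-saturation of } E')^{\perp}$ or similar. One clean way to organize this is to pick a generating set of $E$, say with at most one generator $g_0$ having $\pi(g_0) = m \neq 0$ and the rest lying in $\iota(\mathbb{Z}_D^{n-1})$ generating $E'$; then $x = (x', x_n) \in E^{\perp}$ iff $x' \in (E')^{\perp}$ and $x' \cdot g_0' + x_n m \equiv 0 \bmod D$, and one counts the solutions of this last congruence in $x_n$ for each admissible $x'$ (the number is $0$ or $D/m$ depending on whether $m \mid x' \cdot g_0'$ after suitable reduction). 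Carrying out this count, using the $n=1$ base case for the divisibility bookkeeping and the inductive hypothesis $|(E')^{\perp}| \cdot |E'| = D^{n-1}$ on the nose but restricting attention to the sub-locus where the congruence is solvable, gives exactly the factor of $D$ needed to pass from $D^{n-1}$ to $D^n$. I would also remark that an alternative, perhaps slicker, route is via Pontryagin duality: $E^{\perp}$ is canonically the annihilator of $E$ in the Pontryagin dual $\widehat{\mathbb{Z}_D^n} \cong \mathbb{Z}_D^n$, and for finite abelian groups $|\mathrm{Ann}(E)| = |\widehat{G}| / |E| = |G|/|E|$; but since the paper's ambient level is elementary module theory, I would present the inductive argument as the main proof and mention duality only as a remark.
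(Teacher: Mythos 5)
Your inductive strategy is genuinely different from the paper's proof, which establishes the character-sum identity $\sum_{x\in E}\omega^{\eta\cdot x}=|E|$ if $\eta\in E^{\perp}$ and $0$ otherwise, and then double-counts $\sum_{\eta}\sum_{x\in E}\omega^{\eta\cdot x}$; note that your closing remark about Pontryagin duality is essentially that proof in disguise, since the pairing $\eta\mapsto(x\mapsto\omega^{\eta\cdot x})$ is exactly the self-duality under which $E^{\perp}$ becomes the annihilator of $E$. The skeleton of your induction is sound: one may write $E=\mathbb{Z}_D g_0+\iota(E')$ with $\pi(g_0)=(g_0',m)$ where $m$ is the divisor of $D$ generating $\pi(E)$, so that $|E|=|E'|\cdot(D/m)$, and $(x',x_n)\in E^{\perp}$ if and only if $x'\in(E')^{\perp}$ and $mx_n\equiv -x'\cdot g_0'\pmod{D}$.

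However, there is a gap precisely at the point you flag as delicate, and as written the argument does not close. Two issues. First, the fiber count: for $m\mid D$ the congruence $mx_n\equiv c\pmod{D}$, when solvable, has exactly $m$ solutions (the kernel of multiplication by $m$ on $\mathbb{Z}_D$ is $(D/m)\mathbb{Z}_D$, of size $m$), not $D/m$ as you state. Second and more importantly, you leave open the possibility that the congruence is solvable only on a proper ``sub-locus'' of $(E')^{\perp}$; if that locus were proper, the product $|E||E^{\perp}|$ would come out strictly smaller than $D^n$, so the proof genuinely hinges on ruling this out. The missing observation is that the congruence is solvable for \emph{every} $x'\in(E')^{\perp}$: since $(D/m)g_0=((D/m)g_0',0)$ lies in $E\cap\ker\pi=\iota(E')$, every $x'\in(E')^{\perp}$ satisfies $(D/m)(x'\cdot g_0')=0$ in $\mathbb{Z}_D$, i.e.\ $x'\cdot g_0'\in m\mathbb{Z}_D$. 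With that in hand, $|E^{\perp}|=m\,|(E')^{\perp}|$ and the induction gives $|E||E^{\perp}|=(D/m)|E'|\cdot m|(E')^{\perp}|=D\cdot D^{n-1}=D^n$. So your route is viable and arguably more elementary than the paper's, but this one lemma --- not generic bookkeeping --- is the actual content of the inductive step, and it is absent from your proposal.
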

\begin{proof}
For a $D$-th root of unity $\omega=e^{\frac{2\pi i}{D}}$, we first prove that
\begin{equation}\label{eqn:main}
\sum_{x\in E}\omega^{\eta\cdot x} =
\begin{cases}
      |E|, & \eta\in  E^{\perp}\\
      0 ,       & \eta\notin  E^{\perp}.
    \end{cases}
\end{equation}
When $\eta\in E^{\perp}$, equation (\ref{eqn:main}) is clearly true. Otherwise, we define a map $\eta: E\rightarrow \mathbb{Z}_D, x\mapsto \eta\cdot x$. Clearly, the map $\eta$ is a  module homomorphism, and since $\eta\notin E^{\perp}$, its image $\eta(E)$ is a nontrivial subgroup of $\mathbb{Z}_D$. We use the integers within 0 to $D-1$ as representatives for the elements of $\eta(E)$ and label them along with $D$ on the number axis.

%\vspace{2em}
  \begin{figure}[ht]
      \centering
      \begin{tikzpicture}
       \filldraw[white] (5.5,0) circle (1pt) node[anchor= south]{$\color{black}\cdots$};
           \draw[->,thick](0,0)--(10,0);
            \filldraw[black] (2,0) circle (1pt) node[anchor= south]{$O$};
             \filldraw[black] (8,0) circle (1pt) node[anchor= south]{$D$};
             \filldraw[black] (3,0) circle (1pt) node[anchor= south]{$m$};
              \filldraw[black] (4,0) circle (1pt) node[anchor= south]{$2m$};
               \filldraw[black] (7,0) circle (1pt) node[anchor= south]{$(d-1)m$};

      \end{tikzpicture}
      \caption{Representatives of elements in $\eta(E)$.}
      \label{Axis}
     \end{figure}
     %\vspace{2em}

     Now, all the points in the number axis are equidistant from each other, otherwise there exist three consecutive points $a_1,a_2,a_3$ such that $a_3-a_2>a_2-a_1$ or $a_3-a_2<a_2-a_1$. Without loss of generality, assume the latter is true. Then, since $a_3-a_2\in \eta(E)$, we have $a_1+(a_3-a_2)\in \eta(E)$, but this contradicts the fact that $a_1<a_1+(a_3-a_2)<a_2$. We now let $m$ be the smallest positive integer labeled in the number axis and $|\eta(E)|=d$, then $D=md$ and $\eta(E)=\{0,m,2m,\cdots (d-1)m\}$, as shows in Figure \ref{Axis}. Therefore, we have
     \begin{equation}
      \begin{aligned}
          \sum_{k\in \eta(E) }\omega^k & =\sum^{d-1}_{l=0}\omega^{lm}\\
          & =1+\epsilon^1+\epsilon^2+\cdots+\epsilon^{d-1}\\
          & =\frac{1-\epsilon^d}{1-\epsilon}\\
          & =0.
      \end{aligned}
  \end{equation}
  Here, $\epsilon=\omega^m=e^{\frac{2\pi i}{D}\cdot \frac{D}{d}}=e^{\frac{2\pi i}{d}}$ represents the $d$-th root of unity, and since $\eta(E)$ is non-zero, we have $d>1$, implying $\epsilon\neq{1}$. By the fundamental theorem of module homomorphisms, $\eta$ induces an isomorphism $\overline{\eta}:E/\ker(\eta)\rightarrow\eta(E), x+\ker(\eta)\mapsto\eta(x)$. Therefore,
  \begin{equation}
     \begin{aligned}
         \sum_{x\in E}\omega^{\eta\cdot x} & =\sum_{k\in \eta(E)}\sum_{x\in\eta^{-1}(k)}\omega^{k}
          =|\ker(\eta)|\sum_{k\in \eta(E)}\omega^{k}\\
          & =0.
     \end{aligned}
 \end{equation}
 Now according to equation (\ref{eqn:main}) , we have:
 \begin{equation}
   \begin{aligned}
       |E||E^{\perp}|& =\sum_{\eta\in E^{\perp}}\sum_{x\in E}\omega^{\eta\cdot x}
        =\sum_{\eta\in \mathbb{Z}^n_D}\sum_{x\in E}\omega^{\eta\cdot x}\\
        & =\sum_{x\in E}\sum_{\eta\in \mathbb{Z}^n_D}\omega^{\eta\cdot x}
   \end{aligned}
  \end{equation}
  Let us define the submodule $E'=\mathbb{Z}_D^n$, then $E'^\perp={0}$. Using equation (\ref{eqn:main}) again, we have $\sum_{x\in E}\sum_{\eta\in \mathbb{Z}_D^n}\omega^{\eta\cdot x}=\sum_{\eta\in \mathbb{Z}_D^n}\omega^{\eta\cdot 0}=D^n$.
\end{proof}

 \end{document}